\tikzset{quantum/.style={decorate, decoration=snake}}
\newcommand{\ket}[1]{\left|#1\right\rangle}		
\DeclareMathOperator{\Tr}{Tr}
\newcommand{\ketbra}[2]{\left|#1\rangle\langle#2\right|}
\newcommand{\abs}[1]{\lvert #1\rvert}
\newcommand{\eps}{\varepsilon}
\newcommand{\QPVBBf}{$\mathrm{QPV}_{\mathrm{BB84}}^{f}$}
\definecolor{darkred}{RGB}{179, 16, 32}
\theoremstyle{plain}
\newtheorem{theorem}{Theorem}[section]
\newtheorem{definition}[theorem]{Definition}
\newtheorem{lemma}[theorem]{Lemma}
\title{Security of a Continuous\,-Variable based Quantum Position Verification Protocol
}
\author[1]{Rene Allerstorfer}
\author[1,2]{Lloren\c{c} Escol\`a-Farr\`as}
\author[3]{Arpan Akash Ray}
\author[3]{Boris \v{S}kori\'{c}}
\author[1,2]{Florian Speelman}
\author[1]{Philip {Verduyn Lunel}}
\date{\today}
\affil[1]{QuSoft, CWI Amsterdam, The Netherlands}
\affil[2]{Multiscale Networked Systems,  Informatics Institute, University of Amsterdam, The Netherlands }
\affil[3]{TU Eindhoven, The Netherlands }
\begin{document}

\maketitle

\begin{abstract}
\noindent In this work we study quantum position verification with continuous-variable quantum states. 
In contrast to existing discrete protocols, we present and analyze a protocol that utilizes coherent states and its properties. Compared to discrete-variable photonic states, coherent states offer practical advantages since they can be efficiently prepared and manipulated with current technology. 
We prove security of the protocol against any unentangled attackers via entropic uncertainty relations, showing that the adversary has more uncertainty than the honest prover about the correct response as long as the noise in the quantum channel is below a certain threshold. 
Additionally, we show that attackers who pre-share one continuous-variable EPR pair can break the protocol. 
\end{abstract}

\section{Introduction}\label{sec:Intro}

Position-based cryptography allows for protocols in which the geographical location of a party is used as a cryptographic credential. Consider, for example, the establishment of trust between you and someone at a claimed location. Or sending a confidential message that can only be decrypted at a specific location. Part of position-based cryptography is the task of position verification, where an untrusted prover aims to convince verifiers that he is present at a certain position $P$. 

This primitive was first introduced by Chandran, Goyal,
Moriarty, and Ostrovsky \cite{OriginalPositionBasedCryptChandran2009}, and it has been shown that no classical position verification protocol can exist, due to a universal attack based on cloning input information. This attack fails in the quantum setting because of the no-cloning theorem \cite{Wootters1982NoCloning}. Quantum position verification (QPV) has been studied\footnote{under the name of `quantum tagging'} since the early 2000s by several authors \cite{PatentKentANdOthers,Malaney_2010_b,Malaney_2010_a,Lau_2011}, but despite the failure of the classical universal attack, a universal quantum attack has since been found \cite{Buhrman_2014, Beigi_2011}. However, this attack consumes an amount of entanglement exponential in the input size and is therefore not practically feasible. Thus, we may still find secure QPV protocols in the bounded-entanglement model. 

The analysis of the entanglement resources needed turns out to be a deep question in its own right~\cite{Buhrman_2013,speelman2016instantaneous,dolev2022non,cree2022code,bluhm2022single,allerstorfer2023relating}. Many protocols have since been proposed~\cite{Chakraborty_2015,SWAP_protocol_Rene_et_all,gonzales2019bounds,liu_et_al:LIPIcs.ITCS.2022.100} and different security models have been studied~\cite{Unruh_2014_QPV_random_oracle,gao2016quantum,dolev2019constraining,allerstorfer2022role}. Recent work has focused on the practicality of implementing position-verification protocols. Aspects such as channel loss and error tolerance of certain QPV protocols must be taken into account~\cite{allerstorfer2022role,escolafarras2022singlequbit}.

Almost all previously studied QPV protocols have in common that they contain only finite-dimensional quantum systems. The study of QPV using continuous-variable (CV) quantum information, i.e., using infinite-dimensional quantum states, was first mentioned in \cite{qi2015loss}, in which a general attack was shown in the transmission regime $t \leq 1/2$, but the security of the protocol was not further analyzed.

The best known example of CV quantum information is the quantized harmonic oscillator \cite{cvqi_Braunstein_05, book_cerf_cvqi, cvqi_andersen_10}, which is usually described by continuous variables such as position and momentum. 
Continuous-variable quantum systems are particularly relevant for quantum communication and quantum-limited detection and imaging techniques because they provide a quantum description of the propagating electromagnetic field. Of particular relevance are the eigenstates of the annihilation operator, the so-called coherent states, and their quadrature squeezed counterparts known as squeezed coherent states.
The maiden appearance of CV quantum states in a quantum communication protocol was the CV variant of quantum key distribution (QKD).  Firstly proposed with discrete \cite{PhysRevA.61.010303, PhysRevA.61.022309, PhysRevA.62.062308} and Gaussian \cite{PhysRevA.63.052311} encoding of squeezed states, soon a variety of protocols were published on Gaussian-modulated CV-QKD with coherent states \cite{PhysRevLett.88.057902, article_Grossmman_03, 10.5555/2011564.2011570, PhysRevLett.93.170504}. In this paper, we employ many techniques borrowed from the wealth of research available on CV-QKD. Theoretical reviews with practical considerations of CV-QKD can be found in \cite{garcia2007quantum, leverrier:tel-00451021}.

We extend the ideas of finite-dimensional QPV protocols, and more formally analyze a QPV protocol very similar to the one mentioned in \cite{qi2015loss}. 
We provide a general proof of security against attackers who do not have access to entanglement, 
taking into account attenuation and excess noise in the quantum channel.
By way of illustration, we also analyze a number of specific attacks. We show that the attackers can break the scheme if they pre-share one pair of strongly entangled modes.

In the finite-dimensional case, usually the job of the prover is to complete a task \textit{correctly}, and attackers are detected by a suspiciously high error rate. 
This property of QPV protocols changes in the continuous setting, where even the honest prover's answers are drawn from a probability distribution. Therefore, the verifiers' job is to distinguish an honest sample from an adversarial one. 

Although the generalization of QPV to CV is interesting in itself, the motivation here is practical. 
CV systems are much simpler to handle in practice and leverage several decades of experience in
coherent optical communication technology. 
One particular advantage is that no true single-photon preparation or detection is necessary. 
Clean creation and detection of single photons is still expensive and technically challenging, especially if photon number resolution is desired. 
In contrast, homodyne and heterodyne measurements are easy to implement and a lot of existing infrastructure is geared towards handling light at low-loss telecom wavelengths (1310nm, 1550nm), whereas an ideal single photon source in these wavelength bands still has to be discovered and frequency up-conversion is challenging and introduces new losses and errors. 
Furthermore, loss causes a decrease in the signal-to-noise ratio
in homodyne measurements rather than giving a ``no detection'' event. 
This may open new avenues for protection against the usual lossy attack in discrete variable QPV protocols, in which attackers make use of the ``no detection'' rounds.

\section{Preliminaries}\label{sec:prelims}
In this section, we introduce the continuous-variable formalism that one encounters in CV-QKD, and some information-theoretic results. The goal of this section is threefold. First, we present the different types of CV states used in the paper. We then discuss displacement measurements that can be performed on these states and how a noisy channel is modeled. Finally, we close the section with some useful results from classical and quantum information theory.

\subsection{Gaussian states}
\label{sec:Gauss}
The Wigner function fully describes an $N$-mode bosonic quantum state $\rho$ and can be obtained from $\rho$ by the Wigner formula \cite{wigner1932quantum}
\begin{align}
    W(\mathbf{x}, \mathbf{p}) = \frac{1}{\pi^N} \int_{\mathbb{R}^N}
    e^{2i  \mathbf{p}\cdot \mathbf{y} } \langle \mathbf{x} - \mathbf{y} | \rho | \mathbf{x} + \mathbf{y} \rangle \, \mathrm{d}\mathbf{y}.
\end{align}
This is sometimes also called the Wigner transformation of the density matrix. The inverse transformation is archived via the Weyl transform. Gaussian states are defined by the property that their Wigner function is a Gaussian function in phase space. 
The Wigner function of Gaussian states reads
\begin{align}
    W_\mathrm{G}(\mathbf{r}) = \frac{1}{\pi^{N} \sqrt{\det \Gamma}} \exp \left\{ - (\mathbf{r} - \mathbf{d})^T \Gamma^{-1} (\mathbf{r} - \mathbf{d})\right\},
\end{align}
where $\mathbf{r} = (x_1, p_1, \dots, x_N, p_N)$ are the quadrature variables. The vector $\mathbf{d}$ is the displacement vector,
\begin{equation}
    d_i = \mathbb{E}\hat{r}_i = \Tr [\rho \hat{r}_i].
\end{equation}
And $\Gamma$ is the covariance matrix,
\begin{equation}
    \Gamma_{ij}=\Tr \left[ \rho \big( (\hat r_i-d_i)( \hat r_j-d_j) +(\hat r_j-d_j)( \hat r_i-d_i) \big) \right].
\end{equation}

\subsection{Displacement measurements of CV states}
\label{sec:CVmeas}
Here we describe homodyne and heterodyne measurements, the two types of possible displacement measurements.  For the physics of the measurement process, refer to Chapter 1 of \cite{garcia2007quantum}.

\subsubsection*{Homodyne}
Consider a Wigner function $ W(\mathbf{x}, \mathbf{p})$. A homodyne measurement of the quadrature $x_i$, yields the following marginal probability distribution
\begin{equation}
f_{X_i}(x_i)=\int_{\mathbb{R}^{2N-1}}W(\mathbf{x}, \mathbf{p}) \, \mathrm{d}\mathbf{p} \, \mathrm{d}x_1 \dots \mathrm{d}x_{i-1} \, \mathrm{d}x_{i+1} \dots \mathrm{d}x_{N}.
\end{equation}
One can choose any axis $x_\theta$ along which to perform a homodyne measurement, given a mode. In this case, we rotate our reference frame corresponding to the mode to be measured by an angle $\theta$. We can then perform an integral similar to the one above to obtain $f_{X_\theta}(x_\theta)$.

\subsubsection*{Heterodyne}
A heterodyne measurement is essentially a double homodyne measurement. The selected mode from $W(\mathbf{x}, \mathbf{p})$ is mixed with vacuum on a balanced beamsplitter. A homodyne measurement is then performed on the two output modes, each in conjugate directions. The result obtained is captured by the theorem which follows.
\begin{theorem}
    The heterodyne measurement of a one-mode Gaussian state with displacement $(x_0,p_0)$, produces two Gaussian distributions, centered around $x_0/\sqrt{2}$ and $-p_0/\sqrt{2}$ respectively.
\end{theorem}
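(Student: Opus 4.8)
The plan is to reduce the statement to two facts: (i) a balanced beamsplitter acts \emph{linearly} on the quadrature operators, hence maps a Gaussian state to a Gaussian state whose displacement vector is the image of the old displacement vector under the same linear map; and (ii) a homodyne measurement of a single quadrature of a Gaussian state returns a one-dimensional Gaussian whose mean is the corresponding entry of the displacement vector. Fact (ii) is just the marginalization formula for homodyne detection from Section~\ref{sec:CVmeas} applied to the explicit Gaussian Wigner function $W_{\mathrm G}$ of Section~\ref{sec:Gauss}.

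First I would label the two input modes $\hat{\mathbf r}_1 = (\hat x_1,\hat p_1)$ for the given one-mode Gaussian state, with displacement $\mathbf d_1 = (x_0,p_0)$, and $\hat{\mathbf r}_2 = (\hat x_2,\hat p_2)$ for the ancilla vacuum, with displacement $\mathbf d_2 = (0,0)$. A balanced beamsplitter implements $\hat x_1' = (\hat x_1 + \hat x_2)/\sqrt 2$, $\hat p_1' = (\hat p_1 + \hat p_2)/\sqrt 2$ on the first output and $\hat x_2' = (\hat x_1 - \hat x_2)/\sqrt 2$, $\hat p_2' = (\hat p_1 - \hat p_2)/\sqrt 2$ on the second (the placement of the minus sign is a convention, which I fix so that the conjugate-quadrature channel of the second output picks up the stated sign). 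Because this transformation is symplectic and linear and the input is Gaussian, the two-mode output is Gaussian; taking expectation values gives output displacements $\mathbf d_1' = (x_0/\sqrt 2,\, p_0/\sqrt 2)$ and $\mathbf d_2' = (x_0/\sqrt 2,\, -p_0/\sqrt 2)$ once the sign convention (equivalently, homodyning along $-\hat p_2'$, the conjugate direction) is taken into account.

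Next I would invoke fact (ii): homodyning $\hat x_1'$ on the first output yields the marginal of the Gaussian output Wigner function over all remaining quadratures, which is a one-dimensional Gaussian centered at the $\hat x_1'$-component of $\mathbf d_1'$, namely $x_0/\sqrt 2$; likewise the conjugate-direction homodyne on the second output is a one-dimensional Gaussian centered at $-p_0/\sqrt 2$. That a marginal of a multivariate Gaussian is again Gaussian with mean equal to the corresponding displacement component is immediate from the form of $W_{\mathrm G}$ together with the homodyne marginalization integral; the variances are irrelevant to the statement but can be read off the transformed covariance matrix if one wants them.

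The only genuine obstacle is bookkeeping: pinning down the quadrature normalization (the $\sqrt 2$ in the beamsplitter relations and any $\hbar$-dependent scaling used in defining $\hat x,\hat p$) and the sign/axis convention for the conjugate-direction homodyne on the second output, so that the two centers come out exactly as $x_0/\sqrt 2$ and $-p_0/\sqrt 2$ rather than with swapped or flipped signs. Once the conventions are fixed consistently, the remaining content is a one-line linear-algebra computation of $\mathbf d_1',\mathbf d_2'$ plus a standard Gaussian marginalization.
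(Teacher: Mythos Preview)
Your proposal is correct and follows essentially the same route as the paper: write the balanced beamsplitter as a symplectic linear map, note that Gaussianity is preserved, and read off the transformed displacement vector $(x_0,p_0,0,0)\mapsto(x_0/\sqrt2,\,p_0/\sqrt2,\,-x_0/\sqrt2,\,-p_0/\sqrt2)$, from which the two homodyne means $x_0/\sqrt2$ and $-p_0/\sqrt2$ follow. You are in fact slightly more explicit than the paper in spelling out the final step (that marginalizing a Gaussian Wigner function yields a one-dimensional Gaussian centered at the corresponding displacement entry) and in flagging the sign/convention bookkeeping, but the underlying argument is the same.
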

\begin{proof}
    A balanced beamsplitter is represented by the following symplectic matrix \begin{eqnarray}
        S={\begin{pmatrix}
		\sqrt{\frac{1}{2}}\mathbbm{1}_2 & \sqrt{\frac{1}{2}}\mathbbm{1}_2 \\
	-\sqrt{\frac{1}{2}}\mathbbm{1}_2 & \sqrt{\frac{1}{2}}\mathbbm{1}_2
	\end{pmatrix}}.
    \end{eqnarray}
    As the input state is Gaussian, and mixing preserves Gaussian states, the output states are also Gaussian. The new displacements under this transformation are the given by 
    \begin{eqnarray}
        (x_0,p_0,0,0) S^T=(x_0/\sqrt{2},p_0/\sqrt{2},-x_0/\sqrt{2},-p_0/\sqrt{2}).
    \end{eqnarray}
\end{proof}

\subsection*{Noisy CV channel}

Whereas a discrete qubit state passing through a noisy channel
suffers from qubit loss, bit errors, and phase errors,
a continuous-variable state gets attenuated and acquires excess noise.
Consider a coherent state with displacement $(x_0,p_0)$.
Let $t\in[0,1]$ be the attenuation parameter,
and let $u\geq 0$ denote the excess noise power.\footnote{
In the CVQKD literature the excess noise power is often written as $\frac12 t\xi$,
where the proportionality with $t$ comes from the fact that the adversary mixes 
in his own quantum state using the same beamsplitter that also taps off part of the sender's state.
In our case we have no such adversarial action.
}
The effect of the channel is that the displacement becomes
$(x_0,p_0)\sqrt t$, and the covariance matrix goes from $\mathbbm{1}_2$ to $\mathbbm{1}_2(1+2u)$.
The outcome of a homodyne measurement now has the variance $\frac12+u$ instead of just the $\frac12$
from shot noise. 
In terms of signal and noise, the signal has changed by a factor $t$ and the noise has
increased by a factor $1+2u$. Overall,
the signal-to-noise ratio has changed by a factor $\frac t{1+2u}$.

\subsection{Continuous-variable EPR state and teleportation}
\label{sec:CVformalism}

Consider two modes labeled $A$ and $B$.
The Wigner function of the two-mode squeezed vacuum state (TMSV) with squeezing parameter $\zeta\geq 0$ is given by

\begin{equation}\label{eq Wigner fct EPR}
\begin{split}
    W_{\mathrm{TMSV}}(x_a,p_a,x_b,p_b)&=\frac{1}{\pi^2}\exp\{-e^{-2\zeta}[(x_a+x_b)^2+(p_a-p_b)^2]-e^{2\zeta}[(x_a-x_b)^2+(p_a+p_b)^2]\}\\&=\frac{1}{\pi^2}\exp\left\{-\begin{pmatrix}
        x_a&p_a&x_b&p_b
    \end{pmatrix}\Gamma(\zeta)^{-1}\begin{pmatrix}
 x_a\\
p_a\\
x_b\\
p_b
\end{pmatrix}\right\},
\end{split}
\end{equation}
with covariance matrix
\begin{equation}
    \Gamma(\zeta)=
\begin{pmatrix}
\cosh(2\zeta)\mathbbm{1}_2 & \sinh(2\zeta)Z \\
\sinh(2\zeta)Z & \cosh(2\zeta)\mathbbm{1}_2 \\
\end{pmatrix}, \qquad \mathrm{where} \qquad
Z=\begin{pmatrix}
1 & 0 \\
0 & -1 \\
\end{pmatrix}.
\end{equation}
Throughout this paper $\mathbbm{1}_n$ denotes the $n$-dimensional identity matrix. 
In the limit of the squeezing parameter $\zeta\rightarrow\infty$ we have $W_{\mathrm{TMSV}}(x_a,p_a,x_b,p_b)\rightarrow C\delta(x_a-x_b)\delta(p_a+p_b)$, for a constant $C$, which corresponds to the continuous-variable maximally entangled EPR state. 

Consider a heterodyne measurement performed on the $A$ mode.
The state of the $A$ mode, viewed in isolation, is a thermal state with covariance matrix $K_A=\mathbbm{1}_2 \cosh 2\zeta$.
Using a 50/50 beamsplitter this state gets mixed with the vacuum, resulting in a two-mode $A'A''$
state with covariance matrix
\begin{align}
    K_{A'A''}= \frac{1}{2}
    \begin{pmatrix}
        \mathbbm{1}_2 +K_A & \mathbbm{1}_2 -K_A \\
        \mathbbm{1}_2 -K_A & \mathbbm{1}_2 +K_A
    \end{pmatrix}
    =
    \begin{pmatrix}
        \mathbbm{1}_2\cosh^2 \zeta & -\mathbbm{1}_2\sinh^2 \zeta \\
        -\mathbbm{1}_2\sinh^2 \zeta & \mathbbm{1}_2\cosh^2 \zeta
    \end{pmatrix}.
\end{align}
In mode $A'$ the $x$-quadrature is measured, and in mode $A''$ the $p$-quadrature.
The Wigner function for $x_{a'}$ and $p_{a''}$ is obtained by integrating out $p_{a'}$ and $x_{a''}$
from the Wigner function $A'A''$, resulting in a product of two Gaussian distributions,
${\cal N}_{0,\frac{1}{2} \cosh^2\zeta}(x_{a'})  {\cal N}_{0,\frac{1}{2} \cosh^2\zeta}(p_{a''})$.

If the heterodyne measurement has resulted in $(x_{a'},p_{a''})$, then the post-measurement state of the $B$ subsystem
is a Gaussian state with displacement $(x_B,p_B)=(x_{a'},-p_{a''})\sqrt2 \tanh\zeta$ and covariance $\mathbbm{1}_2$, i.e.\ a coherent state
(see chapter 2 of \cite{leverrier:tel-00451021}).
Note that the components $x_B$ and $p_B$ are Gaussian-distributed with variance 
$\frac{1}{2}\cosh^2\zeta\cdot (\sqrt2 \tanh\zeta)^2=\sinh^2\zeta$.
In Section~\ref{sec:protocolEPR} we tune $\sinh\zeta=\sigma$ so that $x_B,p_B$ have Gaussian statistics
with variance $\sigma^2$. 

\subsubsection*{Teleportation}
The teleportation of an unknown continuous-variable quantum state using a CV EPR pair was proposed by Vaidman~\cite{vaidman1994teleportation} and is described as follows:
\begin{enumerate}
    \item Alice and Bob share a CV-EPR pair described by the Wigner function \eqref{eq Wigner fct EPR}. Alice possesses the single-mode quantum state $\ket\psi$ to be teleported.
    \item With a balanced beamsplitter, Alice mixes $\ket\psi$ with her mode of the CV-EPR pair and then does a measurement of the $x$-quadrature in one mode and the $p$-quadrature in the other mode (i.e.\ she performs a heterodyne measurement). We denote the outcome of the measurement as $(d_x,d_p)$. The result is that Bob's half of the EPR pair is transformed to a displaced version of~$\ket\psi$,
    with displacement $(\sqrt2 d_x,-\sqrt2 d_p)$. Alice sends the classical $(d_x,d_p)$ to Bob.
    \item Bob applies a displacement $(-\sqrt2 d_x,\sqrt2 d_p)$ to his state to obtain $\ket{\psi}$.
\end{enumerate}


\subsection{Information theory}
We now define some basic notions of information theory that will be used in the paper. First, we present some definitions and properties regarding CV entropies.
\begin{definition} Let $X$ be a continuous random variable with probability density function $f(x)$, and let $\mathcal{X}$ be its support set. The \emph{differential Shannon entropy} $h(X)$ is defined as
\begin{equation}
    h(X)=-\int_{\mathcal{X}}f(x)\log f(x) \, \mathrm{d}x,
\end{equation}
    where, if not otherwise mentioned, we use $log$ in base 2. 
\end{definition}

\begin{lemma}
\label{lemma:hscaling}
Let $\alpha >0$ and $X\in\mathbb R$. It holds that
$h(\alpha X) = h(X)+\log\alpha$.
\end{lemma}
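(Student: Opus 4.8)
The plan is to work directly from the definition of differential entropy together with the standard change-of-variables formula for probability densities. First I would let $f$ denote the probability density function of $X$ on its support $\mathcal{X}$, and observe that the random variable $Y=\alpha X$ has density $g(y)=\tfrac{1}{\alpha}f(y/\alpha)$ on the rescaled support $\alpha\mathcal{X}=\{\alpha x : x\in\mathcal{X}\}$; this is the usual transformation rule for a strictly monotone affine map with Jacobian $\alpha>0$.

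Next I would substitute this into the defining integral
\begin{equation}
    h(\alpha X)=-\int_{\alpha\mathcal X} g(y)\log g(y)\,\mathrm{d}y
    =-\int_{\alpha\mathcal X}\tfrac1\alpha f(y/\alpha)\log\!\bigl(\tfrac1\alpha f(y/\alpha)\bigr)\,\mathrm{d}y,
\end{equation}
and then perform the change of variable $y=\alpha x$, $\mathrm{d}y=\alpha\,\mathrm{d}x$, which turns the integral into
\begin{equation}
    h(\alpha X)=-\int_{\mathcal X} f(x)\bigl(\log f(x)-\log\alpha\bigr)\,\mathrm{d}x
    =-\int_{\mathcal X} f(x)\log f(x)\,\mathrm{d}x+\log\alpha\int_{\mathcal X} f(x)\,\mathrm{d}x.
\end{equation}
Since $\int_{\mathcal X} f(x)\,\mathrm{d}x=1$, the first term is exactly $h(X)$ and the second is $\log\alpha$, giving the claim.

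There is no serious obstacle here; the only points requiring a little care are making sure the density transformation picks up the factor $1/\alpha$ correctly (using $\alpha>0$ so that no absolute value or orientation reversal is needed), and noting that the argument presumes $h(X)$ is well-defined (finite), so that splitting the integral into two pieces is legitimate. If one wanted to be fully rigorous about the case where $h(X)$ may be $\pm\infty$, the same computation still shows the identity holds in the extended reals, but for the purposes of this paper stating it for densities with finite differential entropy suffices.
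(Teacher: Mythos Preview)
Your proof is correct and is the standard textbook argument via change of variables. The paper itself states this lemma without proof, treating it as a well-known property of differential entropy, so there is nothing to compare against; your derivation would serve perfectly well as the omitted justification.
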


\begin{definition}
The von Neumann entropy of a state $\rho$ is defined as 
$S(\rho)=-\operatorname{Tr} [\rho\log\rho]$.
\end{definition}
\noindent The von Neumann entropy of Gaussian states is provided by the following lemma, which will be needed to calculate entropies of the honest prover.
\begin{lemma}\label{lem: gfunc} \cite{PhysRevA.59.1820}
Let $\rho$ be an $N$-mode CV Gaussian state with $2N\times 2N$ covariance matrix $\Gamma$. 
Let $\nu_1,\ldots,\nu_N$ be the symplectic eigenvalues of~$\Gamma$.
Let the function $g$ be given by
\begin{equation}
    g(x)=(x+1)\log(x+1)-x \log x.
\label{defgfunction}
\end{equation}
The von Neumann entropy of $\rho$ is given by
\begin{equation}
    S(\rho) = \sum_{i=1}^N g\left(\frac{\nu_i-1}2\right).
\end{equation}
\end{lemma}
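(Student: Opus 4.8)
The plan is to reduce the general $N$-mode Gaussian state to a tensor product of single-mode thermal states by a Gaussian unitary, and then compute the entropy of a single thermal state directly in the Fock basis.

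First, since displacements are implemented by the unitary Weyl operators, which leave the von Neumann entropy invariant, we may assume $\mathbf{d}=0$. Next, invoke Williamson's theorem: for the positive-definite covariance matrix $\Gamma$ there is a symplectic matrix $S$ (i.e.\ $S\Omega S^T=\Omega$ for the standard symplectic form $\Omega$) such that $S\Gamma S^T=\bigoplus_{i=1}^N \nu_i\mathbbm{1}_2$, where $\nu_1,\dots,\nu_N$ are precisely the symplectic eigenvalues of $\Gamma$. By the metaplectic representation (a consequence of the Stone--von Neumann theorem), every symplectic $S$ is realized by a unitary $U_S$ on the $N$-mode Hilbert space that conjugates the quadrature operators by $S$; in particular $U_S$ maps $\rho$ to the Gaussian state $\rho'$ with zero displacement and covariance matrix $\bigoplus_i\nu_i\mathbbm{1}_2$. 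Since $S(U_S\rho U_S^\dagger)=S(\rho)$, it suffices to compute $S(\rho')$.

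The state $\rho'$ factorizes as $\rho'=\bigotimes_{i=1}^N\rho_i$, where $\rho_i$ is the single-mode Gaussian state with covariance $\nu_i\mathbbm{1}_2$, i.e.\ a thermal state. By additivity of the von Neumann entropy over tensor products, $S(\rho')=\sum_{i=1}^N S(\rho_i)$, so the lemma reduces to the one-mode claim $S(\rho_i)=g\!\left(\tfrac{\nu_i-1}{2}\right)$. Writing the mean photon number as $\bar n_i=\tfrac{\nu_i-1}{2}$ (the relation $\nu_i=2\bar n_i+1$ follows from $\Gamma_{xx}=2\langle\hat x^2\rangle=2\bar n_i+1$ using the paper's covariance convention), the thermal state is diagonal in the Fock basis with geometric eigenvalues $p_n=\bar n_i^{\,n}/(\bar n_i+1)^{n+1}$. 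A short computation using $\sum_n n\,p_n=\bar n_i$ then gives $S(\rho_i)=-\sum_n p_n\log p_n=(\bar n_i+1)\log(\bar n_i+1)-\bar n_i\log\bar n_i=g(\bar n_i)$, and summing over $i$ completes the proof.

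The main obstacle is the metaplectic step: one needs that the symplectic group acting on the canonical commutation relations lifts (projectively) to unitaries on Fock space, and that these unitaries act on Gaussian states exactly by the corresponding congruence $\Gamma\mapsto S\Gamma S^T$ on the covariance matrix while fixing the spectrum of $\rho$. Granting that standard structural fact, the remainder is just the Williamson normal form together with the textbook entropy of a thermal state, and this is essentially the argument of \cite{PhysRevA.59.1820}.
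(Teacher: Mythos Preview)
The paper does not give its own proof of this lemma; it simply states the result and cites \cite{PhysRevA.59.1820}. Your proposal is a correct and standard argument (Williamson normal form, metaplectic unitary, then the Fock-basis entropy of a thermal state), and it is precisely the reasoning underlying the cited reference. So there is nothing to compare: you have supplied the proof that the paper outsources, and your convention check $\nu_i=2\bar n_i+1$ matches the paper's covariance normalisation.
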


\begin{lemma} \cite{leverrier:tel-00451021}
The symplectic eigenvalue of a single-mode covariance matrix $\Gamma$ is given by
$\sqrt{\det \Gamma}$.
\end{lemma}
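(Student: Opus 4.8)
The plan is to invoke Williamson's decomposition theorem, which is the standard route to the symplectic eigenvalues. For a single mode, the $2\times 2$ real symmetric positive-definite covariance matrix $\Gamma$ can be written as $\Gamma = S D S^T$, where $S$ is a symplectic matrix (i.e.\ $S J S^T = J$ with $J = \begin{pmatrix} 0 & 1 \\ -1 & 0 \end{pmatrix}$ the one-mode symplectic form) and $D = \nu\,\mathbbm{1}_2$ with $\nu > 0$ the unique symplectic eigenvalue. The first step is to record that any symplectic matrix is unimodular up to sign: taking determinants in $S J S^T = J$ gives $(\det S)^2 \det J = \det J$, hence $(\det S)^2 = 1$ (in fact $\det S = 1$, but $\pm 1$ is all we need). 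The second step is to take determinants on both sides of the Williamson form: $\det\Gamma = (\det S)^2 \det D = \det D = \nu^2$. Since $\Gamma$ is positive definite, $\det\Gamma > 0$, and since $\nu > 0$ we conclude $\nu = \sqrt{\det\Gamma}$.

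As an alternative that avoids citing Williamson's theorem, one can use the characterization of the symplectic eigenvalue as the modulus of the eigenvalues of $iJ\Gamma$. Writing $\Gamma = \begin{pmatrix} a & b \\ b & c \end{pmatrix}$, one computes $J\Gamma = \begin{pmatrix} b & c \\ -a & -b \end{pmatrix}$, which has trace $0$ and determinant $ac - b^2 = \det\Gamma$. Its characteristic polynomial is therefore $\lambda^2 + \det\Gamma$, so the eigenvalues are $\pm i\sqrt{\det\Gamma}$, and again $\nu = \sqrt{\det\Gamma}$.

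There is essentially no obstacle here: the whole content is the unimodularity of symplectic matrices together with multiplicativity of the determinant. The only point worth a line of care is confirming that $\det D = \nu^2$ rather than $\nu$, i.e.\ that in the one-mode Williamson normal form the eigenvalue $\nu$ occupies two diagonal entries — precisely the statement that symplectic eigenvalues appear in the doubled pattern $\mathrm{diag}(\nu_1,\nu_1,\dots,\nu_N,\nu_N)$, specialized to $N=1$.
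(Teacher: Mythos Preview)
Your proof is correct. The paper does not actually prove this lemma itself --- it is merely cited from \cite{leverrier:tel-00451021} --- so there is no in-paper argument to compare against; your Williamson-plus-determinant route (and the alternative via the spectrum of $iJ\Gamma$) is the standard proof and would serve perfectly well here.
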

\noindent In Section~\ref{section:entropy_of_prover} we consider $\sigma \gg 1$ and are interested in the behavior of $g$ in that regime. The following lemma is not too hard to see from \eqref{defgfunction}.
\begin{lemma}
\label{lemma:glarge}
The large-argument behavior of the function $g$, defined in (\ref{defgfunction}), 
is given by
$g(x)\sim \log (ex) +{\cal O}(1/x)$.
\end{lemma}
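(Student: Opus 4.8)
The plan is to isolate the leading logarithm by rewriting $\log(x+1)$ in terms of $\log x$ and then Taylor-expanding the remainder. Starting from the definition, write $\log(x+1)=\log x+\log(1+\tfrac1x)$, so that
\begin{equation}
g(x)=(x+1)\log(x+1)-x\log x=(x+1)\big(\log x+\log(1+\tfrac1x)\big)-x\log x=\log x+(x+1)\log(1+\tfrac1x).
\end{equation}
The whole content of the lemma is therefore the claim that $(x+1)\log(1+\tfrac1x)=\log e+\mathcal O(1/x)$ as $x\to\infty$.

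For this, I would pass to natural logarithm via $\log(1+\tfrac1x)=\tfrac1{\ln 2}\ln(1+\tfrac1x)$ and use the standard expansion $\ln(1+u)=u-\tfrac{u^2}2+\mathcal O(u^3)$ valid for $u=1/x\to 0$. Multiplying by $(x+1)$ gives
\begin{equation}
(x+1)\log(1+\tfrac1x)=\frac1{\ln 2}\Big(1+\frac1x\Big)\Big(\frac1x-\frac1{2x^2}+\mathcal O(\tfrac1{x^3})\Big)=\frac1{\ln 2}\Big(1+\frac1{2x}+\mathcal O(\tfrac1{x^2})\Big)=\log e+\mathcal O(1/x),
\end{equation}
where in the last step I used $\tfrac1{\ln 2}=\log e$. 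Combining with the display above yields $g(x)=\log x+\log e+\mathcal O(1/x)=\log(ex)+\mathcal O(1/x)$, as claimed.

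There is essentially no hard step here: the only thing to be slightly careful about is that the $\mathcal O(1/x)$ bound is genuinely uniform, i.e.\ that the error in the truncated logarithm series can be controlled on a neighbourhood of $u=0$ by a constant times $u^2$ (e.g.\ via the Lagrange form of the remainder, or by noting $|\ln(1+u)-u|\le u^2$ for $|u|\le\tfrac12$), so that after multiplication by $(x+1)$ one indeed gets a genuine $\mathcal O(1/x)$ term rather than merely $o(1)$. If one wanted a self-contained argument avoiding series entirely, an alternative is to set $f(x)=g(x)-\log x$, compute $f'(x)=\log(1+\tfrac1x)-\tfrac1{\ln 2}\cdot\tfrac1{x+1}$, check $f(x)\to\log e$ via l'Hôpital or monotonicity, and bound $|f(x)-\log e|$ by integrating the estimate $|f'(x)|=\mathcal O(1/x^2)$ from $x$ to $\infty$; but the direct expansion above is shorter.
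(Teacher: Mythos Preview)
Your proof is correct. The paper does not actually prove this lemma---it only states it and remarks that it ``is not too hard to see from (\ref{defgfunction})''---so your direct expansion, writing $\log(x+1)=\log x+\log(1+\tfrac1x)$ and Taylor-expanding the logarithm, is exactly the kind of argument the authors are gesturing at, and your care about the uniformity of the $\mathcal O(1/x)$ remainder is appropriate.
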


\noindent Another useful quantity to compare two quantum states is the relative entropy.
\begin{definition}
    Let $\rho$ and $\sigma$ be two density matrices. 
    Their \emph{Umegaki's quantum relative entropy} $D(\cdot||\cdot)$ is defined as
    \begin{equation}
        D(\rho||\sigma)=\operatorname{Tr} \left[ {\rho\log\rho}-{\rho\log\sigma} \right].
    \end{equation}
\end{definition}

\noindent As introduced in \cite{furrer2014position}, let $\rho_{AB}$ be a bipartite state on systems $A$ and $B$, which correspond to a system to be measured and a system held by an observer. Let $X$ be a continuous random variable, $\alpha=2^{-n}$ for some $n\in\mathbb N$, and consider the intervals $\mathcal{I}_{k;\alpha}:=(k\alpha,(k+1)\alpha]$ for $k\in\mathbb Z$.  Here $\rho_B^{k;\alpha}$ denotes the sub-normalized density matrix in $B$ when $x$ is measured in $\mathcal{I}_{k;\alpha}$, $\rho_B^x$ denotes the conditional reduced density matrix in $B$ so that $\int_{\mathcal{I}_{k;\alpha}}\rho_B^xdx=\rho_B^{k;\alpha}$, and $Q_{\alpha}$ denotes the random variable that indicates which interval $x$ belongs to. These notions are used in the continuous version of the conditional entropy.

\begin{definition}
    The \emph{quantum conditional von Neumann entropy} is defined as 
    \begin{equation}
        H(Q_{\alpha}|B)_\rho:=-\sum_{k\in\mathbb Z}D(\rho^{k;\alpha}_{B}||\rho_B).
    \end{equation}
\end{definition}

\begin{definition} We define the  \emph{differential quantum conditional von Neumann entropy} is defined as
\begin{equation}
    h(X|B)_{\rho} := -\int_{\mathbb R} D(\rho_B^x || \rho_B) \, \mathrm{d}x. 
\end{equation}
\end{definition}
    
\noindent The basis of our security proofs is the quantum-mechanical uncertainty principle. We use the following form for the differential entropy in a tripartite setting of a guessing game, as is often useful in the context of quantum cryptography. 

\begin{lemma} 
\label{lemma:uncertainty}
\cite{furrer2014position} 
Let $\rho_{ABC}$ be a tripartite density matrix on systems $A$, $B$ and $C$. Let $Q$ and $P$ denote the random variables of position and momentum respectively, resulting from a homodyne measurement on the $A$ system and let the following hold: $h(Q|B)_{\rho}, h(P|C)_{\rho}>-\infty$ and $H(Q_{\alpha}|B)_{\rho}, H(P_{\alpha}|C)_{\rho}<\infty$ for any $\alpha>0$. Then
\begin{equation}
    h(Q|B)_{\rho}+h(P|C)_{\rho}\geq\log(2\pi).
\end{equation}
\end{lemma}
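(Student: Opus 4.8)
The plan is to obtain the continuous relation as the $\alpha\to0$ limit of a state-independent discretized tripartite entropic uncertainty relation. Fix $\alpha=2^{-n}$. Binning the outcome of a homodyne measurement of $Q$ on $A$ realizes the POVM $\{M_k\}_{k\in\mathbb{Z}}$ with $M_k=\Pi^Q_{\mathcal{I}_{k;\alpha}}$ the spectral projector of $\hat x$ onto $\mathcal{I}_{k;\alpha}$, and binned $P$ realizes $\{N_\ell\}$ with $N_\ell=\Pi^P_{\mathcal{I}_{\ell;\alpha}}$. A short computation from the definitions shows that $-\sum_k D(\rho_B^{k;\alpha}\|\rho_B)$ is exactly the usual quantum conditional entropy $H(Q_\alpha|B)_\rho$ of the binned classical outcome given $B$, so the quantities in the lemma are the standard ones, and the hypothesis $H(Q_\alpha|B)_\rho,H(P_\alpha|C)_\rho<\infty$ ensures they are genuine real numbers rather than $\infty-\infty$. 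I would then invoke the tripartite entropic uncertainty relation with quantum memory (Frank--Lieb; Coles et al.), valid for an arbitrary tripartite state $\rho_{ABC}$ on a separable Hilbert space and measurements with countably many outcomes, whose proof rests on monotonicity of the quantum relative entropy,
\[
H(Q_\alpha|B)_\rho+H(P_\alpha|C)_\rho\ \geq\ \log\frac{1}{c(\alpha)},\qquad c(\alpha):=\max_{k,\ell}\big\|\sqrt{M_k}\sqrt{N_\ell}\big\|^2 .
\]

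The overlap constant is easy to bound. Since each $M_k$ is a projector, $c(\alpha)=\max_{k,\ell}\big\|\Pi^Q_{\mathcal{I}_{k;\alpha}}\Pi^P_{\mathcal{I}_{\ell;\alpha}}\big\|^2=\max_{k,\ell}\big\|\Pi^P_{\mathcal{I}_{\ell;\alpha}}\,\Pi^Q_{\mathcal{I}_{k;\alpha}}\,\Pi^P_{\mathcal{I}_{\ell;\alpha}}\big\|$, and the operator inside is positive and trace-class with trace $\iint_{\mathcal{I}_{k;\alpha}\times\mathcal{I}_{\ell;\alpha}}|\langle q|p\rangle|^2\,\mathrm{d}q\,\mathrm{d}p=\alpha^2/(2\pi)$, using $|\langle q|p\rangle|^2=1/(2\pi)$. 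The operator norm of a positive trace-class operator is at most its trace, so $c(\alpha)\leq\alpha^2/(2\pi)$, whence
\[
H(Q_\alpha|B)_\rho+H(P_\alpha|C)_\rho\ \geq\ \log(2\pi)-2\log\alpha .
\]

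It remains to take $\alpha\to0$, using the asymptotic identity $\lim_{\alpha\to0}\big(H(Q_\alpha|B)_\rho+\log\alpha\big)=h(Q|B)_\rho$, and symmetrically for $(P,C)$. To prove it, split each sub-normalized relative entropy into a scalar part and a normalized-state part: $D(\rho_B^{k;\alpha}\|\rho_B)=p_k\log p_k+p_k\,D\big(\rho_B^{k;\alpha}/p_k\,\big\|\,\rho_B\big)$ with $p_k=\Tr\rho_B^{k;\alpha}$, and the analogous splitting inside the integral defining $h(Q|B)_\rho$. Summing the scalar parts gives $-H(Q_\alpha)$, and $H(Q_\alpha)+\log\alpha$ is a Riemann sum converging to $h(Q)$; summing the state parts gives the Holevo information $I(Q_\alpha:B)$, which is non-decreasing as the partition refines and converges, by monotone convergence and lower semicontinuity of the quantum relative entropy, to $\int_{\mathbb{R}}\Tr[\rho_B^x]\,D\big(\rho_B^x/\Tr[\rho_B^x]\,\big\|\,\rho_B\big)\,\mathrm{d}x$. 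Recombining the two pieces yields $H(Q_\alpha|B)_\rho+\log\alpha\to h(Q|B)_\rho$; the hypotheses $h(Q|B)_\rho,h(P|C)_\rho>-\infty$ and $H(Q_\alpha|B)_\rho,H(P_\alpha|C)_\rho<\infty$ are precisely what make these limits finite and the interchanges legitimate. Adding $\log\alpha$ to each of the two summands on the left of the displayed inequality and letting $\alpha\to0$ gives $h(Q|B)_\rho+h(P|C)_\rho\geq\log(2\pi)$.

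The step I expect to be the main obstacle is the last one: the careful bookkeeping of sub-normalized relative entropies in infinite dimensions and the interchange of the limit $\alpha\to0$ with the defining sum and integral, which must be justified through lower semicontinuity and monotone (or dominated) convergence of the quantum relative entropy while never producing an $\infty-\infty$ expression---this is exactly where the four finiteness hypotheses of the lemma are needed. The discretized uncertainty relation, though conceptually the core, can be quoted as known, and the overlap bound is a one-line estimate.
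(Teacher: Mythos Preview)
The paper does not prove this lemma; it is stated with the citation \cite{furrer2014position} and used as a black-box input. So there is no ``paper's own proof'' to compare against---the paper simply imports the result.

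Your proposal is essentially a faithful sketch of the argument in the cited reference (Furrer et al.). The three-step structure is right: (i) the discretized tripartite uncertainty relation $H(Q_\alpha|B)+H(P_\alpha|C)\geq -\log c(\alpha)$ from Frank--Lieb/Coles et al., which does extend to separable Hilbert spaces and countably many outcomes; (ii) the overlap estimate $c(\alpha)\leq \alpha^2/(2\pi)$ via the trace bound, which is clean and sufficient for the limit even though sharper bounds in terms of prolate spheroidal eigenvalues exist; (iii) the passage to the continuum via $H(Q_\alpha|B)+\log\alpha\to h(Q|B)$. Your decomposition $H(Q_\alpha|B)=H(Q_\alpha)-I(Q_\alpha{:}B)$ and the use of monotonicity of the Holevo information under refinement plus lower semicontinuity of relative entropy is exactly how the cited paper handles the limit, and you are right that this is where all four finiteness hypotheses are consumed.

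One small caution on step (iii): the Riemann-sum convergence $H(Q_\alpha)+\log\alpha\to h(Q)$ needs $h(Q)$ itself to be finite (not only $h(Q|B)>-\infty$), and the convergence of $I(Q_\alpha{:}B)$ to the integral form needs that integral to be finite as well, so that recombining gives a genuine limit rather than $\infty-\infty$. In the original reference this is handled by first checking that the hypotheses force both pieces to be finite separately; you allude to this but would need to spell it out in a full proof.
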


\noindent Furthermore, we will make use of the following estimation inequality.

\begin{theorem} \cite{cover1999elements}\label{theorem:fano}
    Let $X$ be a random variable and $\hat{X}(Y)$ an estimator of $X$ given side information $Y$, then
    \begin{equation}
        \mathbb{E} \left[ \left(X-\hat{X}(Y)\right)^2 \right] \geq \frac{1}{2\pi e}e^{2h_{\mathrm{nats}}(X|Y)},
    \end{equation}
    where $h_{\mathrm{nats}}(X|Y)$ is the conditional entropy in natural units. Moreover, if $X$ is Gaussian and $\hat{X}(Y)$ is its mean, then the equality holds. 
\end{theorem}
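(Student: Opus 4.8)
The plan is to reduce everything to the \emph{estimation error} $W := X-\hat X(Y)$ and then combine two elementary facts about differential entropy: that conditioning cannot increase it, and that among all real random variables with a prescribed second moment the Gaussian is the unique maximizer of differential entropy. Throughout the sketch entropies are understood in nats, matching the $h_{\mathrm{nats}}$ in the statement.

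First I would observe that $\hat X(Y)$ is a deterministic function of $Y$, so shifting $X$ by $\hat X(Y)$ leaves the conditional differential entropy unchanged: $h_{\mathrm{nats}}(W\mid Y)=h_{\mathrm{nats}}(X-\hat X(Y)\mid Y)=h_{\mathrm{nats}}(X\mid Y)$, since for each fixed value $Y=y$ the density of $W$ is simply the density of $X$ given $Y=y$ translated by the constant $\hat X(y)$, and differential entropy is translation invariant (the additive special case of the scaling identity in Lemma~\ref{lemma:hscaling}). Next, dropping the conditioning on $Y$ can only increase differential entropy, so $h_{\mathrm{nats}}(W)\ge h_{\mathrm{nats}}(W\mid Y)=h_{\mathrm{nats}}(X\mid Y)$.

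Then I would invoke the maximum-entropy characterization of the Gaussian: for any real random variable $W$ with $\mathbb{E}[W^2]=m<\infty$ one has $h_{\mathrm{nats}}(W)\le \tfrac12\ln(2\pi e\, m)$, because the zero-mean Gaussian of variance $m$ attains this value and a standard Gibbs'/relative-entropy argument shows optimality; note that $\mathrm{Var}(W)\le\mathbb{E}[W^2]$ always, so replacing the variance by the second moment is legitimate and only weakens the bound. Chaining the two estimates gives $h_{\mathrm{nats}}(X\mid Y)\le \tfrac12\ln(2\pi e\,\mathbb{E}[W^2])$, and exponentiating and rearranging yields exactly $\mathbb{E}[(X-\hat X(Y))^2]\ge \tfrac{1}{2\pi e}\, e^{2h_{\mathrm{nats}}(X\mid Y)}$. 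For the equality claim I would take $(X,Y)$ jointly Gaussian with $\hat X(Y)=\mathbb{E}[X\mid Y]$: then $W$ is itself Gaussian, has mean zero (so $\mathbb{E}[W^2]=\mathrm{Var}(W)$) and is independent of $Y$ (the conditional variance of a jointly Gaussian pair does not depend on the conditioning value), so both inequalities above — conditioning and the maximum-entropy bound — are saturated simultaneously.

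I do not expect a genuine obstacle; the only points needing a little care are the measure-theoretic justification that $h_{\mathrm{nats}}(W\mid Y)=\int h_{\mathrm{nats}}(W\mid Y=y)\,\mathrm{d}P_Y(y)$ is well defined and finite (this is what makes the conditioning-reduces-entropy step rigorous and matches the finiteness hypotheses used elsewhere in the paper), and keeping the bookkeeping between $\mathrm{Var}(W)$ and $\mathbb{E}[W^2]$ consistent so the inequality points in the right direction.
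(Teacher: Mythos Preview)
The paper does not supply a proof of this theorem; it is quoted as a known result with a citation to \cite{cover1999elements}. Your argument is correct and is precisely the standard proof found in that reference: translate by $\hat X(Y)$ (leaving $h_{\mathrm{nats}}(X\mid Y)$ unchanged), remove the conditioning (entropy can only increase), and bound $h_{\mathrm{nats}}(W)$ by the Gaussian maximum-entropy value $\tfrac12\ln(2\pi e\,\mathbb{E}[W^2])$. The equality discussion for the jointly Gaussian case is also handled correctly.
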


\section{The Protocol}

\subsection{Prepare-and-measure}\label{sec:protocol_pm}

Consider two spatially separated verifiers $V_1$ and $V_2$, and a prover P somewhere in between them. 
Let $\cal A$ be a publicly known set of angles in $[0,2\pi)$ such that
$\alpha \in {\cal A} \implies \alpha+\pi/2\in{\cal A}$.
Let $\sigma$ be a publicly known parameter, $\sigma\gg 1$. 
A single round of the protocol consists of the following steps (for a diagrammatic picture see Fig.\,\ref{fig:protocol}):
\begin{enumerate}
\item The verifiers draw random $\theta\in\cal A$ and two random variables $(r,r^\perp)$ from the Gaussian distribution ${\cal N}_{0,\sigma^2}$.
Verifier $V_1$ prepares a coherent state $\ket\psi$ 
with quadratures $(x_0,p_0) = (r\cos\theta + r^\perp\sin\theta, \; r\sin\theta-r^\perp\cos\theta)$.
Then $V_1$ sends $\ket\psi$ to the prover, and $V_2$ sends $\theta$ to the prover.

\item The prover receives $\theta$ and $\ket\psi$ and
performs a homodyne measurement on $\ket\psi$ in the $\theta$ direction,
resulting in a value $r'\in{\mathbb R}$.
The prover sends $r'$ to both verifiers.
\end{enumerate}

\noindent After $n$ rounds, the verifiers have received a sample of responses, which we denote as $(r_i')_{i=1}^n$. The verifiers check whether all prover responses arrived at the correct time,
and whether the reported values $(r_i')_{i=1}^n$  satisfy
\begin{equation}
    \frac1n\sum_{i=1}^n \frac{\left(r_i'-r_i\sqrt t\right)^2}{\frac12+u} <  \gamma \quad\quad \mbox{with }  
    \gamma\stackrel{\rm def}{=} 1 +\frac2{\sqrt n}\sqrt{\ln\frac1{\eps_{\rm hon}}} + \frac2n\ln\frac1{\eps_{\rm hon}}.
\label{score}
\end{equation}
Here $\eps_{\rm hon}$ is an upper bound on the honest prover's failure probability, see Section~\ref{sec:honestprover}. $\eps_{\rm hon}$ is a protocol parameter and can be set to a desired value. The verifiers {\it reject} if not all these checks are satisfied. We refer to the sum in (\ref{score}) as the {\em score}.

\begin{figure}[ht]
    \centering
    \begin{tikzpicture}[node distance=3cm, auto]
    \node (A) {$V_1$};
    \node [left=1cm of A] {};
    \node [right=of A] (B) {$P$};
    \node [right=of B] (C) {$V_2$};
    \node [right=1cm of C] {};
    \node [below=of A] (D) {};
    \node [below=of B] (E) {}; 
    
    \node [right=1.5cm of A] (M) {};
    \node [left=1.5cm of C] (M2) {};
    
    \node [below=of C] (F) {};
    \node [below=of D] (G) {$V_1$};
    \node [below=of E] (H) {};
    \node [below=of F] (I) {$V_2$};
    \node [left= 6cm of E] (J) {};
    \node [below= 3cm of J] (K) {};
    \node [above= 3cm of J] (L) {};

    \draw [->, transform canvas={xshift=0pt, yshift = 0 pt}, quantum] (A) -- (E) node[midway] (x) {} ;
  
    \draw [->] (C) -- (E);
    \draw [][->] (E) -- (I) node[midway] (q) {$r'$}; 
    \draw [][->] (E) -- (G); 

    \draw [->] (L) -- (K) node[midway] {time};

    \node[left=1cm of x, transform canvas={xshift=+ 2pt, yshift = +2 pt}] {coherent state $\ket{\psi}$};
    \node[right = 3.5cm of x, transform canvas={xshift=+ 2pt, yshift = +2 pt}] {$\theta \in {\cal A}$};
    \node[left = 3.3cm of q] {$r'$};

    \node [above=0.5cm of A] (posV00) {};
    \node [left=1cm of posV00] (posV0) {};
    \node [above=0.5cm of C] (posV11) {};
    \node [right=1cm of posV11] (posV1) {};
    \draw [->] (posV0) -- (posV1) node[midway] {position};

    \node [right=0.8cm of C] (VP0) {};
    \node [right=4.55cm of E] (VPP) {};
    \node [right=0.8cm of I] (PV) {};
\end{tikzpicture}
\caption{Schematic representation of the protocol described in Section~\ref{sec:protocol_pm}. Undulated lines represent quantum information, whereas straight lines represent classical information.}
\label{fig:protocol}
\end{figure}
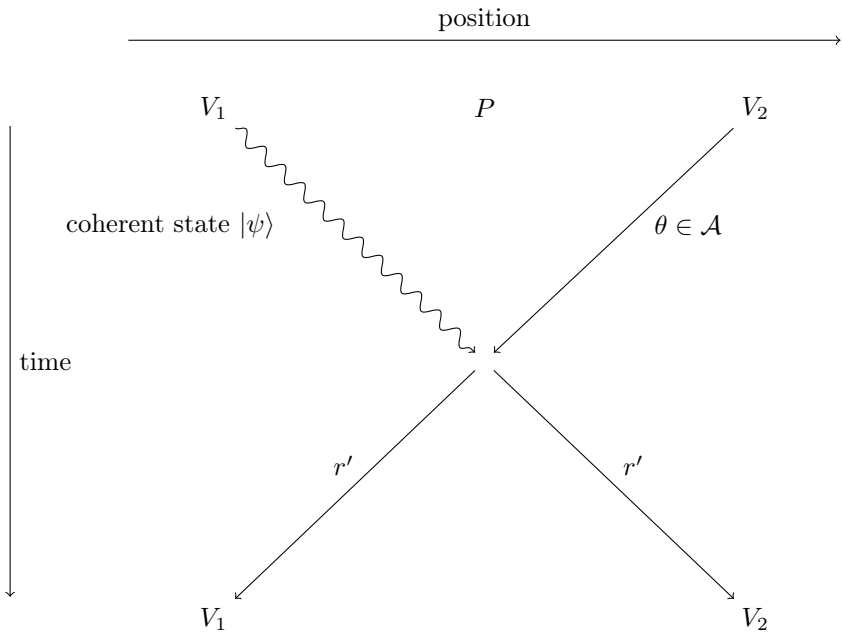

\subsection{Entanglement based version of the protocol}
\label{sec:protocolEPR}

In security proofs for qubit-based schemes, it is customary to re-formulate a protocol into an EPR based form. The act of one party ($V$) preparing and sending a qubit state in a particular basis $\cal B$ is equivalent to $V$ preparing a maximally entangled two-qubit state (EPR pair) and then measuring one of the qubits in the $\cal B$ basis while the other qubit is sent. The act of measuring can be postponed. This has the advantage that in the security analysis the basis choice can be delayed, and it is then possible to base security on the properties of entangled states.

We will do an analogous reformulation for CV states.
In fact, we work with exactly the same states as Gaussian-modulated CV-QKD \cite{laudenbach2018continuous}.
We tune the squeezing parameter $\zeta$ such that $\sinh\zeta=\sigma$, as explained in Section~\ref{sec:CVformalism}.
Preparing a coherent state with Gaussian distributed displacements $x_0,p_0\sim {\cal N}_{0,\sigma^2}$
is equivalent to preparing a two-mode squeezed state with squeezing parameter $\zeta$ and then performing
a heterodyne $(\hat x,\hat p)$ measurement on one mode, with measurement outcome 
$\frac{(x_0, -p_0)}{\sqrt2 \tanh\zeta}$.

In our particular case, the verifier $V_1$ prepares the two-mode squeezed state $\rho_{VP}$ and performs the heterodyne measurement with quadratures that are rotated by the angle~$\theta$ on the $V$ subsystem.
The measurement outcomes are $r/(\sqrt2 \tanh\zeta )$ and $-r^\perp/(\sqrt2 \tanh\zeta)$, resulting in displacement
$(r,r^\perp)$ in the state sent to the prover (i.e.\ subsystem $P$). The prover then performs a homodyne measurement
under angle $\theta$ to recover $r$, similar to the prepare and measure scheme.

In the security analysis in Section~\ref{sec:generalattack} we will explicitly write $V_1$'s heterodyne measurement as 
a double-homodyne measurement. 
First $V_1$ mixes its own mode with the vacuum using a beamsplitter, resulting in a two-mode state. 
On one of these modes $V_1$ then performs a homodyne measurement in the $\theta$-direction, on the other mode in the $\theta+\frac\pi2$ direction.

\subsection{Honest prover} \label{sec:honestprover}

\subsubsection{Success probability}

We show that the honest prover has a failure probability smaller than~$\eps_{\rm hon}$.

\begin{lemma}
\label{lemma:chisquare}
(Eq.(4.3) in \cite{laurent2000adaptive})
Let $X$ be a $\chi^2_n$ distributed random variable. 
It holds that
\begin{equation}
    \mathbb{P}[X-n \geq 2\sqrt{n a}+2a] \leq e^{-a}. 
\end{equation}
\end{lemma}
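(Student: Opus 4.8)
The plan is to derive this from the standard Chernoff (exponential Markov) bound applied to the moment generating function of a chi-squared variable, and then to massage the resulting expression into the clean form stated, using a single elementary scalar inequality.

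First I would write $X=\sum_{i=1}^n Z_i^2$ with $Z_1,\dots,Z_n$ i.i.d.\ standard normal, so that $\mathbb{E}[e^{\lambda X}]=(1-2\lambda)^{-n/2}$ for $0\le\lambda<\tfrac{1}{2}$. For a threshold $t>0$ and $\lambda\in[0,\tfrac{1}{2})$, Markov's inequality applied to $e^{\lambda X}$ gives $\mathbb{P}[X\ge t]\le e^{-\lambda t}(1-2\lambda)^{-n/2}$. Minimizing the right-hand side over $\lambda$ (take logarithms, differentiate) puts the optimum at $1-2\lambda=n/t$, which lies in $(0,1)$ exactly when $t>n$, i.e.\ precisely the regime needed here since $t=n+2\sqrt{na}+2a>n$ for $a>0$. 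Substituting back gives
\[
\mathbb{P}[X\ge t]\le\exp\!\left(-\tfrac{1}{2}(t-n)+\tfrac{n}{2}\log\tfrac{t}{n}\right).
\]

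Then I would set $t=n+2\sqrt{na}+2a$, so that $\tfrac{1}{2}(t-n)=\sqrt{na}+a$, and it remains only to verify $\tfrac{n}{2}\log(t/n)\le\sqrt{na}$; writing $u:=\sqrt{a/n}$ this is the scalar inequality $\log(1+2u+2u^2)\le 2u$ for $u\ge 0$, immediate from $e^{2u}=1+2u+2u^2+\tfrac{4}{3}u^3+\cdots\ge 1+2u+2u^2$. Combining, $\mathbb{P}[X-n\ge 2\sqrt{na}+2a]\le e^{-(\sqrt{na}+a)+\sqrt{na}}=e^{-a}$, as claimed. There is no genuinely hard step: the computation is routine, and the only points needing care are checking that the minimizing $\lambda$ is admissible (in $[0,\tfrac{1}{2})$) and getting the direction of the final scalar inequality right. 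Since the statement is exactly Eq.\,(4.3) of \cite{laurent2000adaptive}, an equally valid route is simply to cite that reference.
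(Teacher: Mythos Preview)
Your Chernoff-bound argument is correct in every step: the MGF, the optimization over $\lambda$, the admissibility check, and the scalar inequality $\log(1+2u+2u^2)\le 2u$ all go through as you wrote them. The paper itself gives no proof of this lemma at all---it is simply quoted from \cite{laurent2000adaptive}---so your self-contained derivation actually supplies more than the paper does, and it is essentially the same argument that appears in that reference.
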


\noindent In round $i$, the honest prover performs a homodyne measurement under an angle $\theta_i$, 
on a coherent state that has displacement $r_i$ in the $\theta_i$ direction (and displacement $r_i^\perp$ in the $\theta_i+\frac\pi2$ direction).
The measurement outcome $R_i'$ is Gaussian-distributed with mean $r_i$ and variance $\frac{1}{2}$ (shot noise).
The random variable $Z=\sum_{i=1}^n (R_i'-r_i\sqrt t)^2/(\frac12+u)$ is chi-square distributed with parameter~$n$, i.e. $Z\sim\chi^2_n$.
The probability that the honest prover fails to pass verification is given by
\begin{equation}
\mathbb{P}[Z\geq  n\gamma] =\mathbb{P} \left[Z\geq n+2\sqrt{n\ln\frac1{\eps_{\rm hon}}} +2\ln\frac1{\eps_{\rm hon}} \right].
\end{equation}
By Lemma~\ref{lemma:chisquare} this is upper bounded by~$\eps_{\rm hon}$.

\subsubsection{A posteriori distribution and entropy of \texorpdfstring{$R$}{R} conditioned on measurement}

We determine how much uncertainty the honest prover has about the displacements $r_i$,
given the measurement outcomes~$r_i'$.
For notational brevity we omit the round index~$i$.
We write the probability density for $R$ as $f_R$. Since $r'$ is the result of a measurement under angle $\theta$, 
conditioning on $\theta$ is implicit and will be omitted from the notation.

The prover's posterior distribution of $R$, given $r'$, is 
\begin{equation}
    f_{R|R'}(r|r') = \frac{f_{RR'}(r,r')}{f_{R'}(r')} = \frac{f_R(r)f_{R'|R}(r'|r)}  {f_{R'}(r')}.
\end{equation}
Using $f_R={\cal N}_{0,\sigma^2}$,
$f_{R'|R}(r'|r)={\cal N}_{r\sqrt t,\frac12+u}(r')$
and 
$f_{R'}={\cal N}_{0,t\sigma^2+\frac{1}{2} +u }$
we get, after some algebra,
\begin{equation}
    f_{R|R'}(r|r') = {\cal N}_{M,\Sigma^2}(r)
    \qquad
    \mbox{with } \Sigma^2 \stackrel{\rm def}{=} \left(\frac{1}{\sigma^2}+\frac{t}{1/2+u}\right)^{-1},
    \quad
    M \stackrel{\rm def}{=} \frac{r'}{\sqrt t}\cdot \frac1{1+\frac{1/2+u}{t\sigma^2}}.
\label{proverposterior}
\end{equation}
For $t\sigma^2\gg 1$ this tends to a normal distribution centered on $r'/\sqrt t$, with 
variance~$(\frac{1}{2}+u)/t$. From the Gaussian probability density function (\ref{proverposterior})
we directly obtain the differential entropy of $R$ given~$R'$,
\begin{equation}
\label{entropy prover}
    h(R|R') = \frac{1}{2} \log 2\pi e\Sigma^2.
\end{equation}

\subsubsection{Entropy of \texorpdfstring{$R$}{R} conditioned on the prover's quantum state}\label{section:entropy_of_prover}

Let $\rho_{VP}$ be the entangled state that is prepared by $V_1$ as described in Section~\ref{sec:protocolEPR}.
Here $P$ denotes the prover's quantum system.
The heterodyne measurement on the $V$ system yields $(r,r^\perp)$.
The measurement maps $\rho_{VP}$ to $\rho_{RR^\perp P}$.
We write the post-measurement state as
\begin{equation}
    \rho_{RR^\perp P}=\int_{\mathbb{R}^2}  f_{RR^\perp}(r,r^\perp)\ketbra{r}{r}_R\otimes\ketbra{r^\perp}{r^\perp}_{R^\perp}\otimes\rho^{rr^\perp}_P \, \mathrm{d}r \mathrm{d}r^\perp.
\end{equation}
The (differential) entropy of $R$, conditioned on the prover's quantum state, can be expanded as
\begin{equation}
    h(R|P) = h(R)+S(P|R) - S(P).
\end{equation}
From the definition of conditional entropy, $S(P|R)=\mathbb{E}_rS(\mathbb{E}_{r^\perp}\rho^{rr^\perp}_P)$ and $S(P)=S(\mathbb{E}_{r,r^\perp}\rho^{rr^\perp}_P)$. As discussed in Section~\ref{sec:protocolEPR},  $\rho^{rr^\perp }_{P}$ would be a coherent state in an ideal case. However, in a noisy channel, the state becomes a Gaussian with covariance matrix $(1+2u)\mathbbm{1}_2$ and displacement $(x_0\sqrt{t},p_0\sqrt{t})$. The expectations are Gaussian integrals and hence are exactly solvable. Solving these integrals, we end up with the corresponding Gaussian Wigner functions and symplectic eigenvalues

\begin{align}
    \text{for } \mathbb{E}_{r^\perp}\rho^{rr^\perp}_P: & \quad W_r(x,p) \sim \exp\left(-\frac{p^2}{2 \sigma ^2 t+2 u+1}-\frac{\left(x-r \sqrt{t}\right)^2}{2 u+1}\right), \quad \nu=\sqrt{2t\sigma^2+2u+1},\\
    \text{for } \mathbb{E}_{r,r^\perp}\rho^{rr^\perp}_P: & \quad W(x,p) \sim \exp -\frac{p^2+x^2}{2 \sigma ^2 t+2 u+1}, \quad \nu=2t\sigma^2+2u+1.
\end{align}
We use the $g$ function (cf. lemma~\ref{lem: gfunc}) to calculate the corresponding entropy
\begin{align}
    S(P|R)&=g\left(\frac{1}{2}\sqrt{2t\sigma^2+2u+1}-\frac{1}{2}\right),\\
    S(P)&= g(t\sigma^2+u).
\end{align}
Finally, by definition $R$ is Gaussian and $h(R)=\frac{1}{2}\log 2\pi e\sigma^2$. All together this yields
\begin{equation}
    h(R|P) =\frac{1}{2} \log 2\pi e \sigma^2 + g\left(\frac{1}{2}\sqrt{2t\sigma^2+2u+1}-\frac{1}{2}\right) -g(t\sigma^2+u)
    \stackrel{ {\rm Lemma}~\ref{lemma:glarge} }{=}
    \frac{1}{2}\log \frac{\pi e t}{1+2u} + O\left(\frac1\sigma\right).
\end{equation}
For large $\sigma$ this is essentially the same as $h(R|R')$ in (\ref{entropy prover}). 


\section{Security against specific attacks}

Before showing security against a general attack, we highlight security against some specific attacks that one might naturally think of. We look into three specific attacks where the adversaries do not have access to entanglement: performing a heterodyne measurement, state splitting and performing a homodyne measurement under a guessed angle.
These examples provide some insight into the security but do not constitute a general security proof.
A rigorous security proof for the case of adversaries who do not pre-share entanglement is given in Section~\ref{sec:generalattack}.

The most general attack of a 1-dimensional QPV protocol consists of placing two attackers Alice~$A$ and Bob~$B$ between $V_1$ and $P$, and $V_2$ and $P$, respectively. For attackers that do not pre-share entanglement,\footnote{We restrict our analysis to the case where the attackers do not pre-share entanglement, since we show in Section~\ref{sec:EPRattack} that there exists a perfect attack if they pre-share an EPR pair.} an attack proceeds as follows. Alice intercepts the quantum state sent to the prover $P$. Alice applies a local operation to her quantum system and sends some classical and/or quantum information to the second attacker Bob. The most general action Bob can take is to intercept the message $\theta$ and broadcast it, since any quantum operation can be embedded in Alice's actions.
After one round of simultaneous communication, Alice and Bob use their respective quantum and classical information to produce a classical output and respond to their closest verifier such that the answer arrives on time. For the following analysis, we describe the attacks per round of the protocol.

\subsection{Heterodyne attack}

In a \emph{heterodyne attack},
Alice performs a heterodyne measurement on the coherent state she intercepts and sends the result $(x',p')$ to Bob. 
At the end, $A$ and $B$ report the best guess for $r$ that they can produce based on $x',p',\theta$.
Let us denote this as the estimator $\tilde r$.
It holds that $\tilde r=\tilde x \cos\theta+\tilde p \sin\theta$,
where $\tilde x$ is an estimator for $x_0$, and similarly $\tilde p$.
The posterior distribution of $x_0$ given $x'$ is
\begin{equation}
    f_{X_0|X'}(x_0|x') = \frac{f_{X_0}(x_0) f_{X'|X_0}(x'|x_0) }  {f_{X'}(x') }
    = \frac{{\cal N}_{0,\sigma^2}(x_0) {\cal N}_{\frac{x_0}{\sqrt2},\frac{1}{2}}(x') }  {{\cal N}_{0,\frac{\sigma^2}2+\frac{1}{2}}(x')}
    = {\cal N}_{x' \sqrt2 \frac{\sigma^2}{1+\sigma^2}, \frac{\sigma^2}{1+\sigma^2}}(x_0).
\label{posteriorhetero}
\end{equation}
Hence $\tilde x=x' \sqrt2 \frac{\sigma^2}{1+\sigma^2}$ and $\tilde p=-p' \sqrt2 \frac{\sigma^2}{1+\sigma^2}$.
Given $x_0,y_0,\theta$, the random variable $\tilde R$ is Gaussian with mean $\frac{\sigma^2}{1+\sigma^2}r$
and variance
$\frac{1}{2} (\sqrt2 \frac{\sigma^2}{1+\sigma^2} \cos\theta)^2 + \frac{1}{2} (\sqrt2 \frac{\sigma^2}{1+\sigma^2} \sin\theta)^2$
$=(\frac{\sigma^2}{1+\sigma^2})^2$.
This gives
\begin{equation}
    \mathbb E(\tilde R-r)^2=\left(\frac{\sigma^2}{1+\sigma^2}\right)^2 + r^2 \left(\frac1{1+\sigma^2}\right)^2
    \approx 1 \qquad \text{for } \sigma \gg 1, 
\label{EdistHetero}
\end{equation}
which is easily distinguishable from the honest prover's value $\frac{1}{2}$.\footnote{
Note that the unbiased estimator $x'\sqrt2 \cos\theta-p'\sqrt2\sin\theta$
would yield $\mathbb E(\tilde R-r)^2=1$, which is larger than (\ref{EdistHetero}).
}
From (\ref{posteriorhetero}) we obtain the variance of $R$ from the attackers' point of view as
$\frac{\sigma^2}{1+\sigma^2}(\cos\theta)^2 +\frac{\sigma^2}{1+\sigma^2}(\sin\theta)^2$
$=\frac{\sigma^2}{1+\sigma^2}$.
The attackers' ignorance about $R$ is thus quantified as
\begin{equation}
    h(R|X' P' \Theta) = \frac{1}{2} \log \left(2\pi e \frac{\sigma^2}{1+\sigma^2}\right),
\end{equation}
with conditioning on $\Theta$ being made explicit.

\subsection{Splitting attack}
In a \emph{splitting attack}, Alice intercepts the coherent quantum state sent by $V_1$, and as in the case of the previous attack, she used a beamsplitter to mix it with a state of her own. She now sends one of the outputs from the beamsplitter to Bob. This allows both attackers to perform a homodyne measurement under the correct angle $\theta$. Unlike the heterodyne attack, this also allows the attackers the freedom to choose the transmittance parameter $T$ and the quantum state that Alice uses. However, the attackers must be cautious to report a set of numbers that have identical means and variances. To see why, let us assume that Alice reports numbers with mean $m_a$ and Bob's results have the mean $m_b$. Let the respective variances be $v_a=v_b$. The verifiers can immediately identify an attack if the results have a dissimilar average. To avoid this, Alice (or Bob) must multiply their results with a finite number $ c$ such that $m_b=cm_a$ (or $m_a=cm_b$). However, it would lead to the verifiers possessing a final distribution with indeed the same mean, but different variances.  The precision of the protocol can be altered to detect said variance. A similar argument can be constructed when the variances are unequal. Thus, a successful attack must follow $m_a=m_b$ and $v_a=v_b$. Now, we propose the following theorem.
 
\begin{theorem}

Consider a 2-mode Gaussian Wigner function $W_{\mathbf{d},\gamma}(x_1,p_1,x_2,p_2)$ which under a beamsplitter transformation of transmittance $T$ transforms into $W'_{\mathbf{d}',\gamma'}(x'_1,p'_1,x'_2,p'_2)$. 
If $\abs{\mathbb{E}[r'_1]}=\abs{\mathbb{E}[r'_2]}$ and $\mathrm{var}(r'_1)=\mathrm{var}(r'_2)$, then $\mathbf{d}_2=0$ and $T=1/2$, for $r \in \{x,p\}$. Here, $\mathbf{d}=(\mathbf{d}_1,\mathbf{d}_2)$ and $\mathbf{d}'=(\mathbf{d}'_1,\mathbf{d}'_2)$.

\end{theorem}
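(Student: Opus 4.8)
The plan is to parametrize a balanced-mode beamsplitter $S$ of transmittance $T$ acting on $(x_1,p_1,x_2,p_2)$ by a $4\times 4$ symplectic matrix of the block form $S = \begin{pmatrix} \sqrt{T}\,\mathbbm{1}_2 & \sqrt{1-T}\,\mathbbm{1}_2 \\ -\sqrt{1-T}\,\mathbbm{1}_2 & \sqrt{T}\,\mathbbm{1}_2 \end{pmatrix}$, so that under the Gaussian-state transformation rules the new displacement vector is $\mathbf{d}' = S\mathbf{d}$ and the new covariance matrix is $\gamma' = S\gamma S^T$. Writing $\mathbf{d}_1 = (d_{1x}, d_{1p})$ and $\mathbf{d}_2 = (d_{2x}, d_{2p})$, the first output mode gets displacement $\mathbf{d}'_1 = \sqrt{T}\,\mathbf{d}_1 + \sqrt{1-T}\,\mathbf{d}_2$ and the second gets $\mathbf{d}'_2 = -\sqrt{1-T}\,\mathbf{d}_1 + \sqrt{T}\,\mathbf{d}_2$. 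For $r \in \{x,p\}$ the reported quantity in mode $j$ is a homodyne outcome under the protocol angle $\theta$, so $\mathbb{E}[r'_j]$ is the corresponding linear combination of the entries of $\mathbf{d}'_j$, and $\mathrm{var}(r'_j)$ is read off from the appropriate diagonal/rotated block of $\gamma'_{jj} = T\gamma_{11} + (1-T)\gamma_{22} \pm \sqrt{T(1-T)}(\gamma_{12}+\gamma_{21})$ (with signs depending on the output mode).

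The key steps, in order, are: (i) impose $|\mathbb{E}[r'_1]| = |\mathbb{E}[r'_2]|$ as an identity in $\theta$ — since this must hold for the protocol's random angle (or, minimally, for enough angles), matching coefficients of $\cos\theta$ and $\sin\theta$ forces the vector identity $|\sqrt{T}\,\mathbf{d}_1 + \sqrt{1-T}\,\mathbf{d}_2| = |{-}\sqrt{1-T}\,\mathbf{d}_1 + \sqrt{T}\,\mathbf{d}_2|$ componentwise (equivalently $\mathbf{d}'_1 = \pm \mathbf{d}'_2$); (ii) impose $\mathrm{var}(r'_1) = \mathrm{var}(r'_2)$, which kills the cross term and yields $\sqrt{T(1-T)}\,(\gamma_{12}+\gamma_{21}) = 0$ in the relevant quadrature, i.e.\ either $T \in \{0,1\}$ or the off-diagonal block vanishes; (iii) combine the two. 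If $T=1$ then mode $2$ output is just the (possibly reflected) input mode $2$ and mode $1$ carries all of $\mathbf{d}_1$; the displacement-matching equation then forces $\mathbf{d}_1 = \pm\mathbf{d}_2$, but that reintroduces a mismatch unless one also re-examines variances — here I expect the argument to reduce to $\mathbf{d}_2 = 0$ being the only simultaneously consistent solution, and then feeding $\mathbf{d}_2 = 0$ back into step (i) gives $\sqrt{T}\,|\mathbf{d}_1| = \sqrt{1-T}\,|\mathbf{d}_1|$, hence $T = 1/2$ (the degenerate case $\mathbf{d}_1 = 0$ making every displacement vanish can be excluded as trivial, or handled separately).

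The main obstacle I anticipate is the bookkeeping around the absolute values and the sign ambiguity in $\mathbf{d}'_1 = \pm\mathbf{d}'_2$: the hypothesis only controls $|\mathbb{E}[r'_j]|$, not its sign, so one must carefully argue that the $``+"$ branch is inconsistent (or leads to the same conclusion) and only the $``-"$ branch — combined with the variance constraint — survives, and that it pins down both $\mathbf{d}_2 = 0$ and $T = 1/2$ rather than a one-parameter family. A secondary subtlety is making precise in what sense the equalities hold ``for $r \in \{x,p\}$ and for the angle $\theta$'': I would state at the outset that the matching is required for the protocol's angle distribution (which contains at least two values differing by $\pi/2$, by the assumption on $\mathcal{A}$), which is exactly enough to separate the $x$- and $p$-components and run the coefficient-matching in step (i).
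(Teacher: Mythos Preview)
Your setup (the symplectic $S$, the rules $\mathbf d'=S\mathbf d$ and $\gamma'=S\gamma S^T$) matches the paper, but step (ii) contains a concrete computational error that derails the argument. You write $\gamma'_{jj}=T\gamma_{11}+(1-T)\gamma_{22}\pm\sqrt{T(1-T)}(\gamma_{12}+\gamma_{21})$ with the \emph{same} diagonal part for both output modes; in fact the second block is $\gamma'_{22}=(1-T)\gamma_{11}+T\gamma_{22}-\sqrt{T(1-T)}(\gamma_{12}+\gamma_{21})$, so the variance equality gives $(2T-1)(\gamma_{11}-\gamma_{22})+2\sqrt{T(1-T)}(\gamma_{12}+\gamma_{21})=0$, not just the cross-term condition you state. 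Consequently your conclusion ``either $T\in\{0,1\}$ or the off-diagonal block vanishes'' does not follow, and the chain in (iii)---which is already admittedly hand-wavy (``I expect the argument to reduce to $\mathbf d_2=0$'')---never gets off the ground.

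What you are missing is the structural assumption the paper uses implicitly: the two input modes are \emph{uncorrelated} (mode~1 is the verifier's coherent state, mode~2 is Alice's ancilla), so $\gamma=\gamma_1\oplus\gamma_2$ and the cross blocks $\gamma_{12},\gamma_{21}$ vanish from the start. With that, the variance equality reduces to $(2T-1)(\gamma_1-\gamma_2)=0$, hence $T=1/2$. The paper then feeds $T=1/2$ into the displacement condition $|\sqrt T\,\mathbf d_1+\sqrt{1-T}\,\mathbf d_2|=|-\sqrt{1-T}\,\mathbf d_1+\sqrt T\,\mathbf d_2|$ (componentwise for $r\in\{x,p\}$; no angle $\theta$ is needed here) and reads off $\mathbf d_2=0$. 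So the logical order is the reverse of yours: variance first pins down $T=1/2$, and only then does the displacement equation force $\mathbf d_2=0$. Your attempt to extract $\mathbf d_2=0$ before knowing $T$ is what makes (iii) muddled.
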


\begin{proof}
We have the following relationship between the covariance matrices of the input and output states
\begin{equation}
\gamma'=S\gamma S^T.
\end{equation}
Where $S$ is the symplectic matrix corresponding to a beamsplitter with transmittance T given by
\begin{equation}
S={\begin{pmatrix}
		\sqrt{T}\mathbbm{1}_2 & \sqrt{1-T}\mathbbm{1}_2 \\
	-\sqrt{1-T}\mathbbm{1}_2 & \sqrt{T}\mathbbm{1}_2
	\end{pmatrix}}.
\end{equation}
The input matrix $\gamma$ is the direct sum of the constituent matrices,
\begin{equation}
\gamma=\gamma_1 \oplus \gamma_2.
\end{equation}
Assuming some displacements $\mathbf{d}'$, we calculate the exponent in $W_{\mathbf{d}',\gamma'}$,
\begin{equation}
	(\mathbf{r}_1-\mathbf{d}'_1,\mathbf{r}_2-\mathbf{d}'_2) \gamma'^{-1} (\mathbf{r}_1-\mathbf{d}'_1,\mathbf{r}_2-\mathbf{d}'_2)^T.
\end{equation}
Substituting, after some matrix multiplications
\begin{align}
	&(\mathbf{r}_1-\mathbf{d}'_1,\mathbf{r}_2-\mathbf{d}'_2) \gamma'^{-1} (\mathbf{r}_1-\mathbf{d}'_1,\mathbf{r}_2-\mathbf{d}'_2)^T\\&=(\mathbf{r}_1-\mathbf{d}'_1,\mathbf{r}_2-\mathbf{d}'_2) \frac{1}{D}\begin{pmatrix}
		(T\gamma_2+(1-T)\gamma_1)\mathbbm{1}_2 & \sqrt{T(1-T)}(\gamma_2-\gamma_1)\mathbbm{1}_2 \\
		\sqrt{T(1-T)}(\gamma_2-\gamma_1)\mathbbm{1}_2 & (T\gamma_1+(1-T)\gamma_2)\mathbbm{1}_2
	\end{pmatrix} \begin{pmatrix}
	\mathbf{r}_1-\mathbf{d}'_1 \\
	\mathbf{r}_2-\mathbf{d}'_2
	\end{pmatrix}\\
	&=\frac{1}{D}\left( (T\gamma_2+(1-T)\gamma_1)(\mathbf{r}_1-\mathbf{d}'_1)^2+(T\gamma_1+(1-T)\gamma_2)(\mathbf{r}_2-\mathbf{d}'_2)^2\right) ,\label{post_bs_wigner_splitatk}
\end{align}
where $D$ is the determinant of $\gamma'$. We are given that $\mathrm{var}(r'_1)=\mathrm{var}(r'_2)$. From the construction of the Wigner function, it is clear that the coefficients in (\ref{post_bs_wigner_splitatk}) must be identical for this to be true, so
\begin{equation}
	T\gamma_2+(1-T)\gamma_1=T\gamma_1 + (1-T)\gamma_2 \Rightarrow T=1/2.
\end{equation}
The displacement transforms as 
\begin{equation}
	(\mathbf{d}'_1,\mathbf{d}'_2)=(\mathbf{d}_1,\mathbf{d}_2)S^T=(\sqrt{T}\mathbf{d}_1+\sqrt{1-T}\mathbf{d}_2, -\sqrt{1-T}\mathbf{d}_1+\sqrt{T}\mathbf{d}_2).
\end{equation}
As $\abs{\mathbb{E}[r'_1]}=\abs{\mathbb{E}[r'_2]}$ (or $\abs{\mathbf{d}'_1}=\abs{\mathbf{d}'_2}$), 
\begin{equation}
	\abs{\sqrt{T}\mathbf{d}_1+\sqrt{1-T}\mathbf{d}_2}=\abs{-\sqrt{1-T}\mathbf{d}_1+\sqrt{T}\mathbf{d}_2}.
\end{equation}
The only meaningful case from this equation yields
\begin{equation}
	\mathbf{d}_2=\frac{\sqrt{T}-\sqrt{1-T}}{\sqrt{T}+\sqrt{1-T}}\mathbf{d}_1.
\end{equation}
When $T=1/2$, this leads to $\mathbf{d}_2= \mathbf{0}$.
\end{proof}

The above theorem fixes the displacement and the transmittance parameter.  However, as we see, there is no restriction on the attackers for choosing the covariance matrix for their quantum state. Since the strongest attack must have the smallest spread, the natural choice is indeed the minimum uncertainty state, that is, a state with unit covariance. 

Hence, the strongest attack is carried out by mixing a vacuum state with the target using a balanced beamsplitter. 

After mixing, $A$ and $B$ have a coherent state with displacement $\frac{(x_0,p_0)}{\sqrt2}$ and $-\frac{(x_0,p_0)}{\sqrt2}$ respectively. 
Taking into account that $B$ compensates for the minus sign,
a homodyne measurement under the correct angle $\theta$ yields an outcome $u$ with distribution
$f_{U|R}(u|r)={\cal N}_{\frac r{\sqrt2},\frac{1}{2}}(u)$
for both attackers.
Their a posteriori distribution for $R$ is
\begin{equation}
    f_{R|U}(r|u)=\frac{f_R(r) f_{U|R}(u|r) }{f_U(u)}
    =\frac{{\cal N}_{0,\sigma^2}(r)  {\cal N}_{\frac r{\sqrt2},\frac{1}{2}}(u) }   {{\cal N}_{0,\frac{\sigma^2}2+\frac{1}{2}}(u)},
\end{equation}
which is the same as for the heterodyne attack.
The rest of the analysis is identical to that case.

\subsection{Attackers perform a homodyne measurement under a guessed angle}

In this attack, Alice picks a random angle $\varphi$ and does a homodyne measurement under this angle.
She forwards the result $m$ to Bob.
The distribution of $m$ is given by
$f_{M|X_0 P_0 \Phi}(m|x_0 p_0 \varphi) = {\cal N}_{x_0\cos\varphi+p_0\sin\varphi,\frac{1}{2}}(m)$
=${\cal N}_{r\cos(\varphi-\theta)+r^\perp\sin(\varphi-\theta),\frac{1}{2}}(m)$.
The attackers' posterior distribution for $R$ is
\begin{eqnarray}
    f_{R|M\Phi\Theta}(r|m\varphi\theta) &=&
    \frac{f_\Theta(\theta)f_\Phi(\varphi) f_R(r) f_{M|R\Theta\Phi}(m|r\theta\varphi)}
    {f_\Theta(\theta)f_\Phi(\varphi) f_{M|\Theta\Phi}(m|\theta\varphi) }
    \\ & \propto &
    f_R(r) f_{M|R\Theta\Phi}(m|r\theta\varphi)
    \\ &=& 
    f_R(r) {\mathbb E}_{r^\perp} f_{M|RR^\perp\Theta\Phi}(m|rr^\perp\theta\varphi) 
    \\ &=&
    f_R(r) {\mathbb E}_{r^\perp} {\cal N}_{r\cos(\varphi-\theta)+r^\perp\sin(\varphi-\theta),\frac{1}{2}}(m)
    \\ &=& 
    {\cal N}_{0,\sigma^2}(r) {\cal N}_{r\cos(\varphi-\theta),\frac{1}{2}+\sigma^2 \sin^2(\varphi-\theta)}(m).
\end{eqnarray}
After some algebra this can be rewritten as
\begin{equation}
    f_{R|M\Phi\Theta}(r|m\varphi\theta) = 
    {\cal N}_{\mu,S^2}(r)
    \quad \mbox{ with }
    \mu=m\cos(\varphi-\theta)\frac{\sigma^2}{\frac{1}{2}+\sigma^2}, \quad
    S^2=\sigma^2 \frac{\frac{1}{2}+\sigma^2\sin^2(\varphi-\theta)}{\frac{1}{2}+\sigma^2}.
\end{equation}

\noindent The attackers send $\mu$ to the verifiers. For the expected score we get 
\begin{eqnarray}
    {\mathbb E} (R-\mu)^2 &= & 
    {\mathbb E} R^2 + {\mathbb E} \mu^2 -2 {\mathbb E} \mu R
    \\ &=&
    \sigma^2 + \left(\frac{\sigma^2}{\frac{1}{2}+\sigma^2}\right)^2 {\mathbb E} m^2 \cos^2(\varphi-\theta) 
    -2\frac{\sigma^2}{\frac{1}{2}+\sigma^2}{\mathbb E} mr\cos(\varphi-\theta).
\label{interimRandomPhi}
\end{eqnarray}
We introduce the notation $\delta=\varphi-\theta$.
We use the distribution of $m$ conditioned on $rr^\perp \varphi\theta$ to write
\begin{eqnarray}
    {\mathbb E} \cos^2 \delta\; m^2  & =&
    {\mathbb E} \cos^2\delta \left[\frac{1}{2} + (r\cos\delta+r^\perp\sin\delta)^2\right]
    \\ &=&
    \frac{1}{2} {\mathbb E} \cos^2\delta + ({\mathbb E} r^2){\mathbb E} \cos^4 \delta+({\mathbb E} [r^\perp]^2) {\mathbb E}\cos^2\delta \sin^2\delta
    \\ &=&
    \frac{1}{2} {\mathbb E} \cos^2\delta + \sigma^2 {\mathbb E} \cos^2\delta
    \\ &=&
    \frac{1}{2}\left(\frac{1}{2}+ \sigma^2\right).
\label{Ec2m2}
\end{eqnarray}
Here we have used that ${\mathbb E} \cos^2\delta=\frac{1}{2}$ because of the uniform~$\varphi$.
Furthermore we have
\begin{equation}
    {\mathbb E}  m r \cos\delta = {\mathbb E} (r^2 \cos^2 \delta+rr^\perp \sin\delta\cos\delta) = \sigma^2/2 +0.
\label{Emrc}
\end{equation}
Substitution of (\ref{Ec2m2},\ref{Emrc}) into (\ref{interimRandomPhi}) yields
\begin{equation}
    {\mathbb E} (R-\mu)^2 = \frac{\sigma^2}2\cdot\frac{\sigma^2+1}{\sigma^2+\frac{1}{2}}.
\end{equation}
This is much larger than the honest prover's value~$1/2+u$ for sufficiently large $\sigma$.

\section{Security against general attacks by unentangled adversaries}
\label{sec:generalattack}

In this section, we show that we not only have security against the above described attacks, but that the result generalizes to all attackers that do not pre-share entanglement by lower bounding their uncertainty higher than the prover's. This is captured by the following theorem.
\begin{theorem}\label{thm:generalattack}
For at least one attacker $E$ participating in a general attack, the differential entropy of $R$ 
given side information held by $E$ follows the inequality
\begin{equation}
    h(R|E)\geq\frac{1}{2}\log \frac{4\pi}{1+\sigma^{-2}},
\label{attackerEntropy}
\end{equation}
where $\sigma$ is the same as defined in Section~\ref{sec:protocolEPR}. 
Furthermore, 
this attacker's response $r'$ satisfies the inequality 
\begin{equation}
    {\mathbb E}(R-r')^2\geq \frac{2}{e}\cdot\frac{1}{1+\sigma^{-2}}.
\label{VarGeneralAttack}
\end{equation}
\end{theorem}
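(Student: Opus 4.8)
The plan is to reduce to the entanglement-based reformulation of Section~\ref{sec:protocolEPR}, apply the tripartite entropic uncertainty relation (Lemma~\ref{lemma:uncertainty}) to a mode retained by $V_1$, and symmetrise over $\theta\leftrightarrow\theta+\pi/2$ to conclude that one of the two attackers must be uncertain about $R$ itself; the bound~\eqref{VarGeneralAttack} will then follow from~\eqref{attackerEntropy} via Theorem~\ref{theorem:fano}. Concretely, in the EPR version $V_1$ prepares $\rho_{VP}$ with $\sinh\zeta=\sigma$, mixes its own mode $V$ with a fresh vacuum on a balanced beamsplitter to obtain modes $V'$ and $V''$, sends $P$ towards the prover, and $V_2$ sends $\theta$. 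By the timing constraint of position verification, Alice (between $V_1$ and $P$) must act on $P$ before $\theta$ can reach her, so her operation is a channel $\mathcal E_A\colon P\to A_{\mathrm{kept}}M$ that is independent of $\theta$; she keeps $A_{\mathrm{kept}}$ and forwards $M$ to Bob, and all remaining freedom (in particular Bob's action, which may be taken to be broadcasting $\theta$) is absorbed either into $\mathcal E_A$ or into the local post-processing each attacker applies to form its response $r'$. After the single round of simultaneous communication Alice holds $E_A=(A_{\mathrm{kept}},\Theta)$ and Bob holds $E_B=(M,\Theta)$, and, crucially, the state $\sigma_{V'A_{\mathrm{kept}}M}:=\Tr_{V''}[(\mathrm{id}_{V'V''}\otimes\mathcal E_A)(\rho_{V'V''P})]$ does not depend on $\theta$.

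The link to $R$ is the heterodyne/teleportation computation of Section~\ref{sec:CVformalism}: when the protocol angle is $\theta$, the value $r$ equals $\sqrt2\tanh\zeta$ times the outcome $X_\theta$ of the homodyne measurement $V_1$ performs on $V'$ in the $\theta$-direction. Hence, by Lemma~\ref{lemma:hscaling}, for either attacker $E$,
\begin{equation}
    h(R\mid E)=\log(\sqrt2\tanh\zeta)+h(X_\theta\mid E),
\label{eq:scalingstep}
\end{equation}
where on the right $X_\theta$ is shorthand for ``homodyne of $V'$ at the angle recorded in $\Theta$''. Applying Lemma~\ref{lemma:uncertainty} to $\sigma_{V'A_{\mathrm{kept}}M}$, with measured system $V'$, side systems $A_{\mathrm{kept}}$ and $M$, and the two conjugate quadratures taken rotated by $\theta$, yields for every $\theta$
\begin{equation}
    h(X_\theta\mid A_{\mathrm{kept}})_\sigma+h(X_{\theta+\pi/2}\mid M)_\sigma\geq\log(2\pi),
\label{eq:URstep}
\end{equation}
with $X_{\theta+\pi/2}$ the homodyne of $V'$ in the conjugate direction, a measurement used only inside the uncertainty relation. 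The finiteness hypotheses of Lemma~\ref{lemma:uncertainty} hold because $V'$ carries the fresh vacuum admixture of $V_1$'s beamsplitter, uncorrelated with anything the attackers hold, so $h(X_\theta\mid A_{\mathrm{kept}})_\sigma$ and $h(X_\theta\mid M)_\sigma$ stay bounded away from $-\infty$ (and, having finite variance, from $+\infty$); this is routine to verify.

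Now average~\eqref{eq:URstep} over $\theta$ drawn uniformly from $\mathcal A$. Since $\alpha\in\mathcal A\Rightarrow\alpha+\pi/2\in\mathcal A$, this distribution is invariant under $\theta\mapsto\theta+\pi/2$, so $\mathbb E_\theta\,h(X_{\theta+\pi/2}\mid M)_\sigma=\mathbb E_\theta\,h(X_\theta\mid M)_\sigma$. Recognising the two averages as $h(X_\theta\mid E_A)$ and $h(X_\theta\mid E_B)$ (the classical register $\Theta$ merely selects which homodyne angle), and inserting~\eqref{eq:scalingstep}, we obtain
\begin{equation}
    h(R\mid E_A)+h(R\mid E_B)\geq\log(2\pi)+2\log(\sqrt2\tanh\zeta)=\log\frac{4\pi}{1+\sigma^{-2}},
\end{equation}
using $\tanh^2\zeta=\sinh^2\zeta/\cosh^2\zeta=\sigma^2/(1+\sigma^2)$. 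Hence $h(R\mid E)\geq\frac12\log\frac{4\pi}{1+\sigma^{-2}}$ for at least one attacker $E$, which is~\eqref{attackerEntropy}. For~\eqref{VarGeneralAttack}, that attacker's reported $r'$ is a (possibly randomised) function of its side information $E$; since measuring a quantum register can only raise a differential conditional entropy, $h_{\mathrm{nats}}(R\mid r')\geq h_{\mathrm{nats}}(R\mid E)$, so Theorem~\ref{theorem:fano} with $X=R$ and estimator $\hat X=r'$ gives $\mathbb E(R-r')^2\geq\frac1{2\pi e}e^{2h_{\mathrm{nats}}(R\mid E)}\geq\frac1{2\pi e}\cdot\frac{4\pi}{1+\sigma^{-2}}=\frac2e\cdot\frac1{1+\sigma^{-2}}$.

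The main obstacle is getting the scaling~\eqref{eq:scalingstep} right, i.e.\ feeding into the uncertainty relation the correct observable on $V_1$'s side. Using a homodyne of the bare mode $V$ would yield only $\frac12\log(2\pi)$, because homodyning $V$ collapses $P$ to a \emph{squeezed} state and the outcome-to-$r$ scaling is then $\tanh2\zeta$; the factor-$2$ gain to $4\pi$ comes precisely from routing through the output mode $V'$ of $V_1$'s internal vacuum beamsplitter, the mode whose homodyne leaves $P$ in a coherent state as the protocol demands, where the scaling is the larger $\sqrt2\tanh\zeta$. The $\theta\leftrightarrow\theta+\pi/2$ symmetrisation, which converts the ``crossed'' inequality~\eqref{eq:URstep} into a genuine lower bound on each attacker's ignorance of $R$, is the other step needing care; the remaining points (the finiteness hypotheses of Lemma~\ref{lemma:uncertainty} and the quantum-to-classical reduction for Theorem~\ref{theorem:fano}) are routine.
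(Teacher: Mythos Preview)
Your proof is correct and follows essentially the same route as the paper's: pass to the entanglement-based picture, apply the tripartite uncertainty relation (Lemma~\ref{lemma:uncertainty}) to the beamsplitter output mode on $V_1$'s side with the attackers' two subsystems as side information, average over $\theta\in\mathcal A$ using the $\theta\mapsto\theta+\pi/2$ invariance of $\mathcal A$, rescale via $R=\sqrt2\tanh\zeta\cdot X_\theta$ with Lemma~\ref{lemma:hscaling}, and finish with Theorem~\ref{theorem:fano}. Your write-up is in fact somewhat more careful than the paper's in spelling out why Alice's channel is $\theta$-independent, why the finiteness hypotheses of Lemma~\ref{lemma:uncertainty} are met, and why the data-processing step from quantum side information to the classical estimator $r'$ is licit before invoking Fano.
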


\begin{proof}
In the entanglement-based protocol, the verifiers perform a heterodyne measurement. 
This is achieved by mixing one half of the TMS state with vacuum (denoted by $O$) and then performing a 
homodyne measurement per mode, in orthogonal directions $\theta$ and $\theta+\frac\pi2$, so
\begin{equation}
    \rho_{VP} \stackrel[\text{with }O]{\mathrm{Mixing}}{\longrightarrow} \rho_{\bar{V}\bar{O}P},
\end{equation}
where the bar represents the modes after mixing. 
Here $P$ is the subsystem sent to the prover and $\bar V$ is the subsystem on which the $\theta$ measurement will be applied.

The attackers (Alice and Bob) perform a quantum operation on the mode $P$ and any ancilla mode. We call the subsystem that Alice holds as $A$, and the one sent to Bob as $B$. 
The resulting state is $\rho_{\bar V\bar O AB}$.
We are interested in the tripartite state $\rho_{\bar V AB}$. We write the result of a homodyne measurement on $\bar V$ under angle $\theta$ as $U_\theta\in\mathbb R$,
and we write $\bar\theta=\theta+\frac\pi2$.
Lemma~\ref{lemma:uncertainty} gives
\begin{equation}
    \forall {\theta\in\cal A}\quad\quad h(U_\theta|A) + h(U_{\bar \theta}|B) \geq \log 2\pi.
\end{equation}
Averaging over $\theta$, and using the fact that averaging over $\bar\theta$ is the same
as averaging over $\theta$, gives
\begin{align}
    {\mathbb E}_{\theta\in\cal A}h(U_\theta|A) + {\mathbb E}_{\theta\in\cal A}h(U_{\bar \theta}|B) &\geq \log 2\pi
    \\
    \implies
    {\mathbb E}_{\theta\in\cal A}h(U_\theta|A) + {\mathbb E}_{\theta\in\cal A}h(U_{\theta}|B) &\geq \log 2\pi.
\end{align}
The last expression can be written as
\begin{equation}
    h(U|A\Theta) + h(U|B\Theta) \geq \log 2\pi,
\end{equation}
where the angle $\Theta$ is now represented as a random variable.
It follows that
\begin{equation}
    \max \Big\{h(U|A\Theta) ,\; h(U|B\Theta) \Big\}\geq \frac{1}{2}\log 2\pi.
\end{equation}
Finally, we note that $R= U\sqrt2 \tanh\zeta$ (with $\sinh\zeta=\sigma$)
and use Lemma~\ref{lemma:hscaling} to conclude
\begin{equation}
   h(R|E)\geq \frac{1}{2}\log 2\pi+ \frac{1}{2}\log \frac2{1+\sigma^{-2}}
   =\frac{1}{2}\log\frac{4\pi}{1+\sigma^{-2}},
\end{equation}
Here, we have set $\max \Big\{h(R|A\Theta) ,\; h(R|B\Theta) \Big\} = h(R|E) $. 
The result for ${\mathbb E}(R-r')^2$
follows directly from the Fano inequality
(Theorem \ref{theorem:fano}).
\end{proof}

\subsection{Comparison between attacker and honest prover}

We will now work in the $\sigma \gg 1$ limit. 
The protocol works only if the attackers have more ignorance about the value $R$ than
the honest prover. Note that we assume that the attackers are powerful and have access to an ideal channel ($t=1, u=0$).
For $\sigma\to\infty$, the difference between their entropies (\ref{attackerEntropy}), (\ref{entropy prover}) satisfies
\begin{equation}\label{equ:hgap}
    h(R|E)-h(R|R') \geq \frac12 \log\left(\frac 4e\cdot \frac t{1+2u}\right).
\end{equation}
The argument of the logarithm needs to be larger than~$1$.
This is the case when
\begin{equation}
    t > \frac{e}{4} \approx 0.680
    \quad \wedge \quad
    u \leq \frac{t\cdot 4/e-1}{2}.
\end{equation}
Note that Fano's inequality applied to the honest prover's entropy
(\ref{entropy prover}) would yield the expression
$\mathbb{E}(\sqrt{t}R-r')^2\approx\Sigma^2$ (as $\sqrt{t}R|R'$ is Gaussian with mean $r'$ in large $\sigma$ limit), 
with $\Sigma^2$ as defined in (\ref{proverposterior}), evaluating to $\Sigma^2 \approx (1/2+u)/t$.
On the other hand, the expected error of the attacker is lower bound by $2/e \approx 0.74$, which is strictly greater than $(1/2+u)/t$ for certain parameter ranges, as depicted in Figure~\ref{fig:SecInsecPlot}. This proves security of the protocol against a general attack in these parameter ranges.

\begin{figure}[h]
    \centering
    \includegraphics[width=0.4\linewidth]{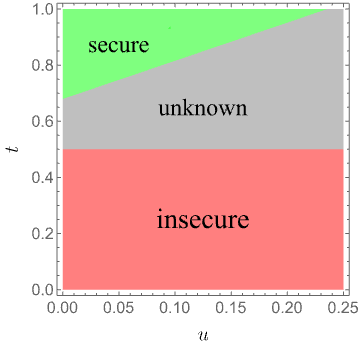}
    \caption{Security of the proposed CV-QPV protocol. For $t \leq 1/2$ it is insecure (red), as shown in \cite{qi2015loss}. For values in the green region, we prove security. Currently, no conclusions can be drawn about the grey region.}
    \label{fig:SecInsecPlot}
\end{figure}

\noindent As long as equation \eqref{equ:hgap} is positive, i.e.
\begin{align}
    \frac{t}{1+2u} > \frac{e}{4},
\end{align}
there's a finite gap between the attacker and the honest entropy about $R$. Then an attack fails if the score is greater than $\gamma$ (cf. Section~\ref{sec:protocol_pm}). To estimate the number of (independent) rounds $n$ we have to run for the attack success probability to become vanishingly small, we cannot assume a specific attack distribution and we have to assume the attackers have access to an ideal channel. We know that
\begin{align}
    \mathbb{E}(R-r')^2 \geq \frac{2}{e}, 
\end{align}
thus $\mathbb{E}(\sqrt{t}R-r')^2 \geq 2/e$ for any transmission $t$. The probability that the attackers' score falls below the threshold $\gamma$ is at most the probability that the score differs from $\mathbb{E}(\sqrt{t}R-r')^2/(1/2+u)$ by more than the difference $\Delta \overset{\mathrm{def}}{=}(2/e)/(1/2+u) - \gamma$\footnote{In the regime where $\Delta > 0$, which is the case for $u \lesssim 2/e - 1/2 \approx 0.24$ for sufficiently large $n$ where $\gamma \approx 1$. This value can also be observed in Figure \ref{fig:SecInsecPlot} at $t=1$.}. Thus we can use the Chebyshev inequality for the random variable of the score to get
\begin{align}
    \mathbb{P}\left[ \left| \frac{1}{n} \sum_{i=1}^n \frac{(\sqrt{t}R_i - r_i')^2}{1/2+u} - \frac{\mathbb{E} (\sqrt{t}R-r')^2}{1/2+u} \right| \geq \Delta \right] \leq \frac{\Tilde{\sigma}^2}{n\Delta^2} = O \left( \frac{1}{n \Delta^2} \right), 
\end{align}
where $\Tilde{\sigma}^2 = \mathbb{V} \left[ \frac{\left(\sqrt{t}R - r' \right)^2}{1/2+u} \right]$. If we set $n \Delta^2 = \Omega \left( \frac{1}{\varepsilon_{\mathrm{att}}} \right)$ then we get
\begin{align}
    \mathbb{P}\left[ \frac{1}{n} \sum_{i=1}^n \frac{(\sqrt{t}R_i - r_i')^2}{1/2+u} \leq \gamma \right] \leq O(\varepsilon_{\mathrm{att}}).
\end{align}

\section{Perfect attack with a single EPR pair} \label{sec:EPRattack}
It turns out that our protocol can be attacked if Alice and Bob pre-share one CV EPR pair (see Section~\ref{sec:prelims} for formal descriptions of CV entanglement and teleportation). The entanglement attack proceeds as follows:

\begin{enumerate}
    \item Alice and Bob pre-share an ideal EPR pair.
    \item 
    Alice teleports $\ket\psi$ to Bob.
    She forwards the measured displacement $(d_x,d_p)$ to Bob. 
    \item 
    Bob intercepts $\theta$ and immediately performs a homodyne measurement under angle $\theta$ on his own half of the EPR pair,
    obtaining outcome $\mu\in\mathbb R$.
    He forwards $\theta,\mu$ to Alice.
    \item
    Alice receives $\theta,\mu$. 
    She computes $r'=\mu-d_x\cos\theta-d_p\sin\theta$ and sends $r'$ to $V_1$.
    \item 
    Bob receives $d_x,d_p$. 
    He computes $r'=\mu-d_x\cos\theta-d_p\sin\theta$ and sends $r'$ to $V_2$.
\end{enumerate}
The state $\ket\psi$ is a coherent state with displacement $(x_0,p_0)$.
The effect of the teleportation is that Bob's half of the EPR pair
becomes a coherent state with displacement $(x_0+d_x,p_0+d_p)$.
Bob's homodyne measurement commutes with the teleport-induced displacement:
the undoing of the displacement can be done {\em after} Bob's measurement.
The noise in $r'$ with respect to $r$ is just shot noise, 
exactly as for the honest prover. Other noises originating from loss or excess noise can just be simulated by the attacker.

Hence, in the case of an ideal pre-shared EPR pair, the
responses from the attackers are statistically indistinguishable from
honest prover responses.

\section{Discussion}

The security analysis of CV-QPV differs from the discrete variable case, as the honest prover now responds with a sample from a probability distribution. 
Thus, to prove security (in the setting without pre-shared entanglement), 
we needed to show that an attack necessarily produces a different distribution than the honest one and that the verifiers can distinguish these distributions. 
We have shown that this can be done using an entropic uncertainty relation for the differential entropy together with a continuum version of the Fano inequality. We included attenuation and excess noise in the honest channel and showed security for a small range of parameters.
We further showed that the considered CV-QPV protocol is broken if one CV-EPR pair is pre-shared between the attackers. 

Since continuous-variable systems have some practical advantages over discrete ones (see Section~\ref{sec:Intro}) we hope that this work may spur interest into the further study of QPV in the context of continuous variables and we hope our techniques can be useful there. 

An immediate next step could be to extend this protocol to the case where the classical information $\theta$ is computed via a function $f(x,y)$ taking inputs $x, y$ from both verifiers, similar to the discrete variable \QPVBBf~protocol \cite{bluhm2022single,escolafarras2022singlequbit}, and to study CV entanglement attacks on that.

More generally, one may ask how far results on QPV for discrete variable protocols generalize or naturally carry over to the CV setting. 
For example, can the recent formulation of CV port-based teleportation \cite{pereira2023continuous} be used to immediately re-formalize the general attack on discrete variable QPV \cite{Beigi_2011} in the CV setting? 
Do the known attacks, which scale with properties of circuit decompositions of the provers' unitary \cite{speelman2016instantaneous, dolev2022non}, naturally generalize, for example to CV equivalents of $T$-count or $T$-depth? 

\subsubsection*{Acknowledgments}
We thank Kfir Dolev for interesting initial discussions on the topic of CV-QPV. RA was supported by the Dutch Research Council (NWO/OCW), as part of the Quantum Software Consortium programme (project number 024.003.037). PVL was supported by the Dutch Research Council (NWO/OCW), as part of the NWO Gravitation Programme Networks (project number 024.002.003). FS and LEF are supported by the Dutch Ministry of Economic Affairs and Climate Policy (EZK), as part of the Quantum Delta NL programme. B\v{S} and AAR acknowledge the support from Groeifonds Quantum Delta NL KAT2.


\bibliographystyle{alpha}
\bibliography{biblio}

\end{document}